\let\proof\@undefined                        % undefine \proof
\let\endproof\@undefined                  % undefine \endproof
\algnewcommand{\algorithmicgoto}{\textbf{go to}}%
\algnewcommand{\Goto}[1]{\algorithmicgoto~\ref{#1}}%
\algnewcommand{\LineComment}[1]{\Statex \(\triangleright\) #1}
\algnewcommand{\LineCommentN}[1]{\Statex \hspace{1cm}\(\triangleright\) #1}
\newtheorem{prop}{Proposition} % this could go into the preamble
\newtheorem{cor}{Corollary}
\newtheorem{thm}{Theorem}
	\newtheorem{assumption}{Assumption}
\newtheorem{lem}{Lemma}
\newtheorem{defn}{Definition}
\newtheorem{rem}{Remark}
\newtheorem{problem}{Problem}
\newcommand{\yong}[1]{{\color{black} #1}}
\newcommand{\moh}[1]{{\color{black} #1}}
\begin{document}

% paper title
\title{\LARGE \bf Simultaneous Mode, Input and State Set-Valued Observers %for Hidden Mode Switched Linear Systems 
with Applications to Resilient Estimation \yong{a}gainst Sparse Attacks}

% You will get a Paper-ID when submitting a pdf file to the conference system
%\author{Author Names Omitted for Anonymous Review. Paper-ID Sze Zheng Yong}
\author{%
Mohammad Khajenejad \quad\quad Sze Zheng Yong \\%\quad\quad  %Emilio Frazzoli$^{\rm a}$\\
\thanks{%$^1$ The authors are with the Laboratory for Information and Decision Systems,
%Massachusetts Institute of Technology, Cambridge, MA, USA (e-mail: szyong@mit.edu, mhzhu@mit.edu, frazzoli@mit.edu).
$^{\rm a}$ Mohammad Khajenejad and Sze Zheng Yong are with the School for Engineering of Matter, Transport and Energy, Arizona State University, Tempe, AZ, USA (e-mail:  mkhajene@asu.edu, szyong@asu.edu).}
%\thanks{
%$^{\rm b}$ M. Zhu is with the Department of Electrical Engineering, Pennsylvania State University, 201 Old Main, University Park, PA 16802, USA (e-mail: muz16@psu.edu). }%This work was done when M. Zhu was with the Laboratory for Information and Decision Systems at Massachusetts Institute of Technology.}
\vspace{-0.4cm}
%}
}
\maketitle
\thispagestyle{empty}
\pagestyle{empty}

\begin{abstract}
A simultaneous mode, input and state set-valued observer is proposed for hidden mode switched linear systems with bounded-norm noise and %completely 
unknown input signals. The observer consists of two constituents: (i) a bank of mode-matched observers %, one corresponds to each mode, as well as 
and (ii) a mode estimator. Each %of the 
mode-matched observer %corresponds  %The time-varying dynamic can be presented by convex combination of strongly detectable linear systems. The observer 
recursively outputs the mode-matched sets of compatible states and unknown inputs, % corresponding to its mode, 
while the mode estimator eliminates incompatible modes, using a residual-based criterion. Then, the estimated sets of states and unknown inputs are the union of the mode-matched estimates over all compatible modes. Moreover, %multiple 
sufficient conditions to guarantee the elimination of all false modes are provided and % when applying the mode elimination approach. % 		which at each time step, removes the inconsistent modes and their corresponding observers from a bank of filters. The bank of filters, consists of mode correspondent observers, where each of them corresponds to one specific mode, which if is the true mode, the observer simultaneously finds bounded sets of states and unknown inputs that include the true state and inputs. To achieve an effective mode elimination approach, a tractable upper bound signal for the residual's norm is provided, by converting the constraint set of an NP-hard 2-norm maximization problem to a convex set with finite number of extreme points and enumerating the objective function on them. Moreover, multiple sufficient conditions to guarantee that the proposed approach eventually eliminates all the false modes are presented, using matrix lower bound theorem, as well as the convergence behavior of the provided residual norm's upper bound signal. 
		the effectiveness of our approach is exhibited using an illustrative example.
 \end{abstract}
\vspace{-0.1cm}
\section{Introduction}
\vspace{-0.1cm}
Potential vulnerability of Cyber-Physical Systems (CPS) to adversarial attacks and henceforth their security, are emerging as an %extremely 
important and critical issue. Given that %malicious 
attackers are often strategic, there are many potential avenues through which they can cause harm, steal information/power, etc. %, from such systems. 
%The security of Cyber-Physical Systems (CPS) is emerging as an extremely critical and important issue. Since physical and software components are deeply intertwined in CPS,
% has been emerging as a critical issue, since 
%such systems are potentially vulnerable to adversarial attacks, which could be harmful for both the physical systems and their operators. %(\cite{Cardenas.2008b, %Richards.2008, 
	%slay2007lessons,
	%Farwell.2011,ukraine.2016})
	%Hence, a huge effort has been done for   
%	Given that adversarial attackers may behave strategically, there are many potential avenues through which they can achieve their goals of causing harm, information/power theft, etc. 
	%Misleading the system operator by 
	Recent incidents of attacks on CPS, e.g., the Maroochy water system and Ukrainian power grid, %the StuxNet computer worm 	and various industrial security incidents 
	\cite{Cardenas.2008b,ukraine.2016}, % ,Farwell.2011}, 
highlight a need for new resilient estimation and control designs. 
	
	In particular, an adversary's ability to inject counterfeit data into sensor and actuator signals (false data injection) or to compromise an unknown subset of vulnerable sensors and actuators (e.g., \cite{fawzi2014secure,Pasqualetti.2013,pajic2017design,yong2018switching,chong2015observability,shoukry2016event,mo2016secure}) in order to mislead the system operator has been a subject of considerable interest in recent years. This problem can be considered in a more general framework  of hidden mode switched linear systems with unknown inputs and also has applications in urban transportation systems \cite{yong2018switching}, aircraft tracking and fault detection \cite{Liu.Hwang.2011}, etc. 
	
\emph{Literature review.} The filtering problem of hidden mode systems without unknown inputs have been extensively studied
 (see, e.g.,  \cite{Bar-Shalom.2002,Mazor.1998} and references therein). %, 
 %using a multiple model approach. 
 More recently, an extension to consider unknown inputs has been proposed in \cite{yong2018switching} for stochastic systems. However, 
these methods mainly focus on obtaining \emph{point} estimates, i.e., the most likely or best single estimates, and do not directly apply to bounded-error models, 
i.e., uncertain dynamic systems with set-valued uncertainties \yong{(e.g., bounded-norm noise)}, where the sets of all modes, states and unknown inputs that are compatible with sensor observations are desired.

On the other hand, set-membership or set-valued state observers  (e.g., \cite{dahleh1994control,shamma1999set,blanchini2012convex}) are capable of estimating the set of compatible states and are preferable to stochastic estimation when hard accuracy bounds are important, e.g., to guarantee safety. Moreover, a recent extension to also compute the set of unknown input signals in addition to the states has been introduced in \cite{yong2018simultaneous}. However, these %set-membership 
approaches %cannot be used 
do not apply to  hidden mode systems that 
we consider in this paper.

In the context of resilient estimation against sparse false data injection attacks, numerous approaches were proposed (e.g., \cite{fawzi2014secure,Pasqualetti.2013,pajic2017design,yong2018switching,chong2015observability,shoukry2016event,mo2016secure}), but they all only obtain point estimates, as opposed to set-valued estimate\moh{s}. Moreover, only sensor attacks have been considered, although actuator attacks are also a source of concern in CPS security. On the other hand, our prior work in \cite{yong2018simultaneous,khajenejad2019acc} design a fixed-order set-valued observer that simultaneously outputs sets of compatible state and input estimates despite data injection attacks for linear time-invariant and linear parameter-varying systems, without considering the hidden modes, i.e., with the assumption that the subset of attacked sensors and actuators is known.

To consider hidden modes, %it is 
a common approach is to construct \emph{residual} signals, especially for fault detection %purposes 
\cite{patton2013issues}, where %exceeding 
a threshold %by 
based on the residual signal is used to %, 
distinguish between %es 
consistent and inconsistent modes. % of operation.   
Using this idea, % notion, 
\cite{nakahira2018attack} presents a robust control inspired resilient state estimator for %bounded-error models, 
\yong{models with bounded-norm noise that} %, which 
consists of %a combination of a 
local estimators, 
residual detectors and a global fusion detector. %They also provide sufficient conditions to guarantee that their observer is attack resilient. 
However, in their setting, only sensors %can be 
are attacked, while the existence of the observers are assumed with no observer design approach nor performance guarantees. 

 \emph{Contributions.}
 The goal of this paper is to simultaneously consider state and unknown input estimation as well as mode detection for %bounded-error 
 hidden mode switched linear systems with \yong{bounded-norm noise and unknown inputs}. % completely unknown and sparse inputs. % and norm-bounded noise signals.  
 To address this, we propose a multiple-model approach that leverages the optimally designed set-valued state and input $\mathcal{H}_\infty$ observers in our previous work \cite{yong2018simultaneous} to obtain a bank of mode-matched set-valued observers in combination with a novel mode observer based on elimination. Our mode elimination approach uses the upper bound of the norm of %some 
 to-be-designed residual signals to remove %the 
 inconsistent modes from the bank of observers. 
% Each of the observers in the bank, which corresponds to one specific mode of the system, is designed using a recursive fixed-order approach proposed in \cite{yong2018simultaneous}, %observers for linear hidden mode systems with unknown input and bounded noise signals, 
% such that given that its corresponding mode is true, the observer optimally finds bounded sets of states and unknown inputs that contain the true state and unknown input and are compatible/consistent with the measurement outputs, and it is guaranteed that the estimation errors are bounded, assuming that the system is strongly detectable in that mode. %We also provide necessary conditions for the boundedness of the set-valued estimates. 
%Specifically, we consider linear parameter-varying system dynamics that can be presented as a convex combination of linear time-invariant \emph{constituent} dynamics. 
%, known as linear parameter-varying systems, and propose a fixed-order set-valued observer
% and are optimal in the minimum $\mathcal{H}_\infty$-norm sense, i.e., with minimum average power amplification.  %, based on a game-theory approach to $\mathcal{H}_\infty$ estimation. 
% To apply our mode elimination algorithm, 
 In particular, we provide a tractable method to calculate an upper bound signal for the residual's norm
% , by expanding the constraint set of an NP-hard maximization problem and converting it to a 2-norm maximization problem over a hypercube constraint set, where we can find the value function by enumerating the objective function on the extreme points. We also
 and prove that the upper bound signal is a convergent sequence. % to a steady value, so is asymptotically bounded.% and present a novel method for eliminating incompatible modes at each time steps. We show that the upper bound, which can be conveniently calculated by solving a mathematical programming and be used in small time steps, diverges to infinity. So we derive another upper bound, and show that it converges to a steady value, so can be used especially for asymptotic analysis.
  Moreover, we provide sufficient conditions to guarantee that all false modes will be eventually eliminated. % under some reasonable assumptions. %by our mode elimination approach. %, using the convergence property of the upper bound signal and matrix lower bound theorem. 

\emph{Notation.} %We first summarize some notations used throughout the paper. 
$\mathbb{R}^n$ denotes the $n$-dimensional Euclidean space %, $\mathbb{C}$ the field of complex numbers 
and $\mathbb{N}$ nonnegative integers. For a vector $v \in \mathbb{R}^n$ and a matrix $M \in \mathbb{R}^{p \times q}$, $\|v\|_2 \triangleq \sqrt{v^\top v}$, $\| v \|_{\infty} \triangleq \max \limits_{1 \leq i \leq n} v_i$ and $\|M\|_2$ and $\sigma_{\min}(M)$ denote their induced $2$-norm and non-trivial least singular value, respectively. %denotes the 2-norm. %, i.e., $\|v\| \triangleq \sqrt{v^\top v}$, 
\section{Problem Statement} \label{sec:Problem}
%\vspace{-0.1cm}
%\begin{figure}[!h]
%\begin{center}
%\includegraphics[scale=0.2635]{Figures/hybrid_diagram.png}
%\caption{Illustration of switched linear system as a hybrid automaton with two modes, $q$ and $q'$.\label{fig:hyb_diag} }
%\vspace{-0.3cm}
%\end{center}
%\end{figure}
%\noindent\textbf{\emph{System and Unknown Input (or Attack) Assumptions.}} 
Consider a %bounded-error
 hidden mode switched linear system with bounded-norm noise and unknown inputs (i.e., %a dynamical system with multiple modes where 
 a hybrid system with linear and noisy system dynamics in each mode, and %the system dynamics in each mode is linear and uncertain and 
 the mode and some inputs are not known/measured): % (see Figure \ref{fig:hyb_diag}):
\begin{align} \label{eq:sys_desc}
\begin{array}{ll}
\hspace{-0.25cm}%\begin{array}{rl}
%\begin{array}{rl}
%\begin{array}{lll}
 x_{k+1}\hspace{-0.15cm}&=A x_k\hspace{-0.1cm}+\hspace{-0.1cm}B u^{q}_k\hspace{-0.1cm}+\hspace{-0.1cm}G^{q} d^{q}_{k} \hspace{-0.1cm}+\hspace{-0.1cm}w_k, %\nonumber%\\
%& \, \ \ \quad \hspace{4.15cm} 
%x_k\in \mathcal{C}_{q}\\
%(x_k,q)^+ \hspace{-0.15cm}&=(x_k,\delta^{q}(x_k)),  \quad \qquad \quad \qquad \quad \qquad x_k \in \mathcal{D}_{q}\\
\\ %\nonumber
y_k&=C x_k\hspace{-0.1cm} +\hspace{-0.1cm} D u^{q}_k \hspace{-0.1cm}+\hspace{-0.1cm}H^{q} d^{q}_k \hspace{-0.1cm}+\hspace{-0.1cm} v_{k},%q \in \mathbb{Q}. 
\end{array}
\end{align}
%\begin{align} 
%\hspace{-0.25cm}%\begin{array}{rl}
 %x_{k+1}\hspace{-0.15cm}&=A x_k\hspace{-0.1cm}+\hspace{-0.1cm}B u_k\hspace{-0.1cm}+\hspace{-0.1cm}G d_{k} \hspace{-0.1cm}+\hspace{-0.1cm}w_k, \nonumber%\\
%& \, \ \ \quad \hspace{4.15cm} 
%x_k\in \mathcal{C}_{q}\\
%(x_k,q)^+ \hspace{-0.15cm}&=(x_k,\delta^{q}(x_k)),  \quad \qquad \quad \qquad \quad \qquad x_k \in \mathcal{D}_{q}\\
%\\ \label{eq:hybridDyn}
%y_k&=C_{k} x_k\hspace{-0.1cm} +\hspace{-0.1cm} D u_k \hspace{-0.1cm}+\hspace{-0.1cm}H d_k \hspace{-0.1cm}+\hspace{-0.1cm} v_{k} \vspace{-0.4cm}%\end{array}
%\end{align}
%\begin{align} \label{eq:hybridDyn}
%\begin{array}{ll}
%x_{k+1}&=A x_k+B u_k+G d_k + w_k,\\
%y_k&=C x_k +D u_k + H d_k + v_k, \end{array}
%\end{align}
%\begin{align} \label{eq:hybridDyn}
%\nonumber x_{k+1}&=A_k^{q} x_k+B_k^{q} u^{q}_k+G_{k}^{q} d^{q}_{k} +w^{q}_k \\
%\nonumber & \, \ \ \quad \hspace{4.15cm} x_k\in C_{q}\\
%x_k^+&=x_k  \  \quad \qquad \quad \qquad x_k \in D_{q}\\
%\nonumber y_k&=C^{q}_{k} x_k + D^{q}_{k} u^{q}_k +H^{q}_{k} d^{q}_k + v^{q}_{k}
%\end{align}
%\begin{align} \label{eq:hybridDyn}
%\nonumber x_{k+1}&=A_k^{q'} x_k+B_k^{q'} u^{q'}_k+G_{k}^{q'} d^{q'}_{k} +w^{q'}_k \\
%\nonumber & \, \ \ \quad \hspace{4.15cm} x_k\in C_{q'}\\
%x_k^+&=x_k  \  \quad \qquad \quad \qquad x_k \in D_{q'}\\
%\nonumber y_k&=C^{q'}_{k} x_k + D^{q'}_{k} u^{q'}_k +H^{q'}_{k} d^{q'}_k + v^{q'}_{k}
%\end{align}
where $x_k \in \mathbb{R}^n$ is the continuous system state and $q \in \mathbb{Q}=\{1,2,\dots,Q\}$ is the hidden discrete state or \emph{mode}. For each (fixed) mode $q$, 
% $q \in \{1,2,\hdots,N\}$ the hidden discrete state or \emph{mode}. For each mode $q$, 
$u^q_k \in U^q_{k} \subset \mathbb{R}^m$ is the \emph{known} input, $d^q_k \in \mathbb{R}^p$ the unknown but \emph{sparse} input or attack signal, i.e., every vector $d^q_k$ has precisely $\rho \in \mathbb{N}$ nonzero elements where $\rho$ is a known parameter, $y_k \in \mathbb{R}^l$ is the output,
% $\delta^{q}(\cdot)$ the mode transition function, $\mathcal{C}_{q}$ and $\mathcal{D}_{q}$ are flow and jump sets, 
%while the process noise 
whereas $w_k \in \mathbb{R}^n$ and $v_k \in \mathbb{R}^l$ are process and measurement 2-norm bounded disturbances with known parameters $\eta_w$ and $\eta_v$ as their 2-norm bounds respectively.
% are assumed to be mutually uncorrelated, zero-mean, Gaussian white random signals with known covariance matrices, $Q^{q}_k=\mathbb{E} [w_k^{q} w_k^{q \top}] \succeq 0$ and $R^{q}_k=\mathbb{E} [v^{q}_k v_k^{q \top}] \succ 0$, respectively. 
The matrices $A \in \mathbb{R}^{n \times n}$, $B \in \mathbb{R}^{n \times m}$, $G^q \in \mathbb{R}^{n \times p}$, $C \in \mathbb{R}^{l \times n}$, $D \in \mathbb{R}^{l \times m}$ and $H^q \in \mathbb{R}^{l \times p}$ are known
%. $x_0$ is assumed to be independent of $v^{q}_k$ and $w^{q}_k$ for all $k$.
 and no prior `useful' knowledge or assumption of the dynamics of $d^q_k$, except \emph{sparsity} is assumed. 
 
 More precisely, $G^q$ and $H^q$ represent the different hypothesis for each mode $q \in \mathbb{Q}$, about the sparsity pattern of the unknown inputs, which in the context of sparse attacks corresponds to which actuators and sensors are attacked or not attacked. In other words, we assume that $G^q=G\mathbb{I}^q_G$ and $H^q=H\mathbb{I}^q_H$ for some input matrices $G \in \mathbb{R}^{n \times t_a} $ and $H \in \mathbb{R}^{l \times t_s} $, where 
$t_a$ and $t_s$ are the number of vulnerable actuator and sensor signals respectively. Note that $\rho^q_a \leq t_a \leq m$ and  $\rho^q_s \leq t_s \leq l$, where $\rho^q_a$ ($\rho^q_s$) is the number of attacked actuator (sensor) signals and clearly cannot exceed the number of vulnerable actuator (sensor) signals, which in turn cannot exceed the total number of actuators (sensors). Furthermore, we assume that the total number of unknown inputs/attacks in each mode is known and equals $\rho=\rho_a+\rho_s$ (sparsity assumption). Moreover, \moh{the \emph{index matrix}} $\mathbb{I}^q_G \in \mathbb{R}^{t_a \times \rho}$ ($\mathbb{I}^q_H \in \mathbb{R}^{t_s \times \rho}$) represents the sub-vector of $d_k \in \mathbb{R}^{\rho}$ \yong{that indicates} signal magnitude attacks on the actuators (sensors).  
 
 Note that the approach in our paper can be easily extended to handle mode-dependent $A$, $B$, $C$, $D$, $w_k$, $v_k$, $\eta_w$ and $\eta_v$ but is omitted to simplify the notation. 
Moreover, throughout the paper, we assume, without loss of generality, that for each possible mode $q$, the system $(A,G^q,C,H^q)$ is strongly detectable \cite[Definition 1]{yong2018simultaneous}, since this is a necessary and sufficient condition for obtaining meaningful set-valued state and input estimates when the mode is known. % and it is worth mentioning that our approach tries to indicate which sensors and actuators are not attacked. 

%system has
%\emph{strong detectability}, i.e., the initial condition $x_0$ and the unknown input sequence $\{d^{q_j}_j\} ^{r-1}_{j=0}$ can be uniquely determined from the measured output sequence $\{y_i \}^r_{j=0}$ of a sufficient number of observations, i.e.,  $r \geq r_0$ for some $r_0 \in \mathbb{N}$ (see \cite[Section 3.2]{Yong.Zhu.ea.14} for necessary and sufficient conditions for this property) and the required rank condition for the existence of a stable filter \cite[Theorem 9]{Yong.Zhu.ea.14} is satisfied.
Using the modeling framework above, the simultaneous state, unknown input and hidden mode estimation problem %, addressed in this paper, 
is threefold and can be stated as follows:%\newline
%\begin{enumerate}
\begin{problem}
Given a switched linear hidden mode discrete-time bounded-error system with unknown inputs \eqref{eq:sys_desc}, 
\begin{enumerate}[\labelindent=0pt]
\item Design a bank of mode-matched observers that for each mode %, one of the observers 
optimally finds the set estimates  %estimates 
of compatible states and unknown inputs in the minimum $\mathcal{H}_\infty$-norm sense, i.e., with minimum average power amplification, conditional on the mode being true.
\item Develop a mode observer via elimination and the corresponding criterion to eliminate false modes. %as well as its corresponding criterion to detect and eliminate false modes with certainty.
\item Find sufficient conditions for eliminating all false modes.
\end{enumerate}
\end{problem}
 %Moreover, in the time-invariant case, derive necessary and sufficient conditions for convergence of filter gains to a stabilizing/steady-state solution.}
%\subsection{Problem Statement}
%\vspace{-0.1cm}
%\begin{problem}
%Develop a criteria to detect and eliminate false modes.
%\end{problem} 
%% incompatible modes, given the observation and the above bank of observers.}
%\begin{problem}
%Provide sufficient conditions for asymptotically elimination of all the false modes.
%\end{problem}
% an attack-resilient set-valued observer for system \eqref{eq:system} %, i.e., a state filter 
%that computes a bounded set of state estimates that contains the true state and identifies the set of compatible attack signals irrespective of the magnitude of false data injection attacks on its actuators and sensors. In addition, recommend preventative attack mitigation strategies based on detectability conditions.}% characterize fundamental limitations to attack resilience: (i) the maximum number of asymptotically correctable signal attacks and (ii) the maximum number of required models with this estimator. }
%\end{enumerate}
%The objective of this paper\footnote{\yong{Due to space limitation, a technical characterization of the inference algorithm will be presented in an upcoming companion paper\cite{Yong.Zhu.ea.ACC15b}.}} is to design an optimal recursive filter algorithm which simultaneously estimates the system state $x_k$, the unknown input $d^{q}_k$ and the hidden mode $q$ based on %an initial estimate $\hat{x}_0$ and 
%the measurements up to time $k$, $\{y_0,y_1,\hdots, y_k \}$. 
\section{Proposed Observer Design}
In this section, we propose a multiple-model approach for simultaneous mode, state and unknown input estimation for \eqref{eq:sys_desc}, where the goal of the observer is to find compatible set estimates $\hat{D}_k$, $\hat{X}_k$ and $\hat{\mathbb{Q}}_k$ for unknown inputs, states and modes at time step $k$, respectively.
%\vspace{-0.1cm}
\subsection{Overview of Multiple-Model Approach} \label{sec:prelim}
%\vspace{-0.1cm}
 The multiple-model design approach consists of three components: (i) \yong{d}esigning a bank of mode-matched set-valued observers, (ii) designing a mode observer for eliminating incompatible modes using residual detectors, and (iii) a global fusion observer that outputs the desired set-valued mode, input and state estimates. 
%In this section, we present a brief summary of the set-valued optimal filter for linear systems with unknown inputs. For detailed proof and derivation of the observer, the reader is referred to \cite{yong2018simultaneous}.
% Moreover, we define a \emph{generalized innovation} and show that it is a Gaussian white noise. These form an essential part of the multiple model estimation algorithm that  we will describe in Section \ref{sec:MainResult}. The algorithm runs a bank of $\mathfrak{N}$ filters (one for each mode) in parallel and each of the the filter are in essence the same except for the different sets of matrices and signals $\{A_k^{q},B_k^{q},C_k^{q},D_k^{q},G_k^{q},H_k^{q},q^{q},R_k^{q},u_k^{q},d_k^{q}\}$. Hence, to simplify notation, the conditioning on the mode $q$ is omitted in the entire Section \ref{sec:prelim}.
%\vspace{-0.1cm}
\subsubsection{Mode-Matched Set-Valued Observer} \label{sec:ULISE}
First, we design a bank of mode-matched observers, \yong{which} consists of $Q$ simultaneous state and input $\mathcal{H}_\infty$ set-valued observers based on the optimal fixed-order observer \moh{design} in \cite{yong2018simultaneous}, which we briefly summarize here. %for brevity, we present a brief summary of it. % For detailed proof and derivation of the observer, the reader is referred to \cite{yong2018simultaneous}.
%\vspace{-0.1cm}
For each mode-matched observer corresponding to mode $q$, %Suppose that the specific mode $q$ at time step $k$ is the true mode. 
following the approach in \cite[Section 3.1]{yong2018simultaneous}, we consider set-valued fixed-order estimates of the form:
\begin{align}
\hat{D}^{q}_{k-1}&=\{d_{k-1} \in \mathbb{R}^p: \|d_{k-1}-\hat{d}^{q}_{k-1}\|\leq \delta^{d,q}_{k-1}\},\\
%\hat{X}^{\star,q}_k&=\{x_k \in \mathbb{R}^n: \|x_k-\hat{x}^{\star,q}_{k|k}\|\leq \delta^{x,\star,q}_{k}\},\\
\hat{X}^{q}_k&=\{x_k \in \mathbb{R}^n: \|x_k-\hat{x}^{q}_{k|k}\| \leq \delta^{x,q}_k\},
\end{align}
where their centroids %can be 
are obtained with 
% Then, the system, after a similarity transformation $T^q =\begin{bmatrix} T^{q \top}_1 & T^{q\top}_2 \end{bmatrix}^\top \triangleq U^{q\top} =\begin{bmatrix} U^q_{1} & U^q_{2} \end{bmatrix}^\top $  is given by:
%\begin{align}
%x_{k+1} & = A x_k+B u^{q}_k+G^{q}_{1} d^{q}_{1,k} +G^{q}_{2} d^{q}_{2,k} +w_k  \label{eq:sysX}\\
%z^{q}_{1,k}&= C^{q}_{1} x^{q}_k + D^{q}_{1} u^{q}_k +\Sigma^{q} d^{q}_{1} + v^{q}_{1,k} \label{eq:z1}\\
%z^{q}_{2,k} &= C^{q}_{2} x^{q}_k + D^{q}_{2,k} u_k + v^{q}_{2,k} \label{eq:z2},
%\end{align}
%where using singular value decomposition, $H^q=\begin{bmatrix}U^q_{1}& U^q_{2} \end{bmatrix} \begin{bmatrix} \Sigma^q & 0 \\ 0 & 0 \end{bmatrix} \begin{bmatrix} V^{q{\, \top}}_1 \\ V^{q{\, \top}}_2 \end{bmatrix}$, $G^q_{1} \triangleq G^q V^q_{1}$, $G^q_{2} \triangleq G^q V^q_{2}$, $H^q_{1} \triangleq H^q V^q_{1}=U^q_{1}\Sigma^q$, $C^q_{1} \triangleq U^{q\top}_1 C$, $C^q_{2} \triangleq U^{q\top}_2 C$, $D^q_{1} \triangleq U^{q\top}_1 D$, $D^q_{2} \triangleq U^{q\top}_2 D$, $v^q_{1,k} \triangleq U^{q\top}_1 v_k$ and $v^q_{2,k} \triangleq  U^{q\top}_2 v_k$.The transformation essentially decomposes the unknown input $d^q_k$ and the measurement $y_k$ each into two orthogonal components, i.e., $d^{q}_{1,k} \in \mathbb{R}^{p_{H^{q}}}$ and $d^{q}_{2,k} \in \mathbb{R}^{p-p_{H^{q}}}$; as well as $z^{q}_{1,k}\in \mathbb{R}^{p_{H^{q}}}$ and $z^{q}_{2,k} \in \mathbb{R}^{l-p_{H^{q}}}$, where $p_{H^{q}}=\textrm{rank}(H^{q})$. 
%Given measurements up to time $k-1$, 
the following three-step recursive observer that is optimal in $\mathcal{H}_{\infty}$-norm sense: %can be summarized as follows:

\noindent\emph{Unknown Input Estimation}:
\begin{align}
\begin{array}{rl}
\hat{d}^{q}_{1,k} \hspace{-0.2cm}&=M^{q}_{1} (z^{q}_{1,k}-C^{q}_{1} \hat{x}^{q}_{k|k}-D^{q}_{1} u^{q}_k)\\% \label{eq:variant1} \\
\hat{d}^{q}_{2,k-1}\hspace{-0.2cm}&=M^{q}_{2} (z^{q}_{2,k}-C^{q}_{2} \hat{x}^{q}_{k|k-1}-D^{q}_{2} u^{q}_k)\\% \label{eq:d2}\\
\hat{d}^{q}_{k-1}\hspace{-0.2cm}&= V^{q}_{1} \hat{d}^{q}_{1,k-1} + V^{q}_{2} \hat{d}^{q}_{2,k-1} \end{array}%\label{eq:d}
\end{align}
\emph{Time Update}:
\begin{align}
\hspace{-0.3cm}\begin{array}{rl}
 \hat{x}^{q}_{k|k-1}\hspace{-0.1cm}&=A \hat{x}^{q}_{k-1 | k-1} + B u^{q}_{k-1} + G^{q}_{1} \hat{d}^{q}_{1,k-1} \\%\label{eq:time} \\
\hat{x}^{\star,q}_{k|k}&=\hat{x}^{q}_{k|k-1}+G^{q}_{2} \hat{d}^{q}_{2,k-1} %\label{eq:xstar}
\end{array}\hspace{-0.3cm}
\end{align}
\emph{Measurement Update}:
\begin{align}
\hat{x}^{q}_{k|k}
&= \hat{x}^{\star,q}_{k|k} +\tilde{L}^{q}  (z^{q}_{2,k}-C^{q}_{2} \hat{x}^{\star,q}_{k|k}-D^{q}_{2} u^{q}_k)  \quad \label{eq:stateEst}
\end{align}
%where $\hat{x}_{k-1|k-1}$, $\hat{d}_{1,k-1}$, $\hat{d}_{2,k-1}$ and $\hat{d}_{k-1}$ denote the optimal estimates of $x_{k-1}$, $d_{1,k-1}$, ${d}_{2,k-1}$ and $d_{k-1}$; $\Gamma_k \in \mathbb{R}^{p_{\tilde{R}} \times l-p_{H_k}}$ is a design matrix that is chosen to project the ``innovation" $\overline{\nu}_k:=z_{2,k}-C_{2,k} \hat{x}^\star_{k|k}-D_{2,k} u_k$ onto a vector of $p_{\tilde{R}}$ independent random variables, while
%$\overline{L}_k \in \mathbb{R}^{n \times p_{\tilde{R}}}$, $M_{1,k} \in \mathbb{R}^{p_{H_k} \times p_{H_k}}$ and $M_{2,k} \in \mathbb{R}^{(p-p_{H_k}) \times (l-p_{H_k})}$ are filter gain matrices that minimize the state and input error covariances. For the sake of completeness, the optimal input and state filter in \cite{Yong.Zhu.ea.14} is reproduced in Algorithm \ref{algorithm1}.
where %$L \in \mathbb{R}^{n \times l}$, 
$\tilde{L}^q \in \mathbb{R}^{n \times (l-p_{H^q})}$, $M^q_{1} \in \mathbb{R}^{p_{H^q} \times p_{H^q}}$ and $M^q_{2} \in \mathbb{R}^{(p-p_{H^q}) \times (l-p_{H^q})}$ are observer gain matrices that are chosen in %Theorem \ref{thm:filterbanks}
the following theorem from \cite{yong2018simultaneous}  to minimize the ``volume'' of the set of compatible states and unknown inputs, quantified by the radii $\delta^{d,q}_{k-1}$ and $\delta_k^{x,q}$.

\begin{thm}\cite[Lemma 2 \& Theorem 4]{yong2018simultaneous} \label{thm:filterbanks}
Suppose the system $(A,G^q,C,H^q)$ is strongly detectable, $M^q_{1} \Sigma^q=I$ and $M^q_{2}C^q_{2} G^q_{2}=I$. Then, for each mode $q$, there exists a stable and optimal (in $\mathcal{H}_{\infty}$-norm sense) observer with gain $\tilde{L}^q$, where the input and state estimation errors, $\tilde{d}^q_{k-1}\triangleq d^q_{k-1}-\hat{d}^q_{k-1}$ and %, $\hat{x}^\star_{k|k}$ and 
$\tilde{x}^q_{k|k}\triangleq x_k- \hat{x}^q_{k|k}$, are bounded for all $k$ (i.e., 
the set-valued estimates are bounded with radii $\delta^{d,q}_{k-1}, \delta_k^{x,q} < \infty$), and the observer gains and the set estimates are given in  \cite[Theorem 2 \& Algorithm 1]{yong2018simultaneous}. 
\end{thm}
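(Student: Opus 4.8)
The plan is to verify the claim by deriving the closed-form error dynamics and then reducing the assertion to a standard stable $\mathcal{H}_\infty$ filtering result. Since the statement specializes the unknown-input-and-state estimator of \cite{yong2018simultaneous}, I would reconstruct the argument rather than invoke it as a black box. The skeleton has three parts: decouple the unknown input from the error recursion, stabilize and $\mathcal{H}_\infty$-optimize the resulting error system over the free gains, and then read off boundedness of the radii from stability.

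First I would substitute the plant equations \eqref{eq:sys_desc} into the three observer steps and compute the input and state errors $\tilde{d}^q_{k-1}$ and $\tilde{x}^q_{k|k}$. The transformed outputs $z^q_{1,k}, z^q_{2,k}$ and the partitioned matrices $C^q_1, C^q_2, G^q_1, G^q_2, \Sigma^q$ arise from an SVD-based coordinate change on $H^q$ (with $p_{H^q} \triangleq \operatorname{rank} H^q$) that isolates the measurement directions in which the unknown input appears directly from those in which it does not. The two algebraic constraints $M^q_1 \Sigma^q = I$ and $M^q_2 C^q_2 G^q_2 = I$ are exactly the unbiasedness conditions: under them the input-estimate error and the state-estimate error become independent of $d^q_k$, so the sparse (possibly adversarial) input is fully cancelled from the error recursion. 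Here $M^q_1$ is pinned down, while $M^q_2$ and $\tilde{L}^q$ retain the freedom used later.

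After this cancellation I expect the state error to satisfy a linear recursion $\tilde{x}^q_{k|k} = \bar{A}^q \tilde{x}^q_{k-1|k-1} + (\text{terms linear in } w_{k-1}, v_k, v_{k-1})$, with $\bar{A}^q$ assembled from $A$, $G^q_2$, $M^q_2$, $C^q_2$ and the free gain $\tilde{L}^q$, and with $\tilde{d}^q_{k-1}$ an affine readout of this error plus noise. The next step is to choose $\tilde{L}^q$ (and the residual freedom in $M^q_2$) to render $\bar{A}^q$ Schur and to minimize the $\mathcal{H}_\infty$ gain from $(w,v)$ to $(\tilde{d}^q, \tilde{x}^q)$. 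This is where strong detectability of $(A,G^q,C,H^q)$ enters, since it is precisely the condition guaranteeing that a stabilizing $\tilde{L}^q$ exists for the decoupled error system. I would cast this as a bounded-real-lemma / discrete Riccati (or LMI) problem and cite its solvability to obtain the explicit optimal gain of \cite[Theorem 2]{yong2018simultaneous}.

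Finally, boundedness of the radii follows from stability: because $\bar{A}^q$ is Schur and $\|w_k\|_2 \leq \eta_w$, $\|v_k\|_2 \leq \eta_v$, an error-bound recursion for $\delta^{x,q}_k$ (and hence $\delta^{d,q}_{k-1}$) is a stable affine recursion with a finite steady-state limit, so $\delta^{d,q}_{k-1}, \delta^{x,q}_k < \infty$ for all $k$. The main obstacle is the $\mathcal{H}_\infty$-optimality step: showing that under strong detectability the associated Riccati/LMI is feasible and that its solution simultaneously stabilizes $\bar{A}^q$ and minimizes the worst-case gain requires the duality argument of \cite{yong2018simultaneous}; by contrast, the decoupling algebra of the first step is routine once the coordinate change on $H^q$ is in place.
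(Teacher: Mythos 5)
Your reconstruction is correct and takes essentially the same route as the source: the paper itself gives no proof of Theorem~\ref{thm:filterbanks}, importing it directly from \cite{yong2018simultaneous}, whose argument proceeds exactly as you sketch --- an SVD-based transformation of the output isolating the direct-feedthrough directions of $H^q$, the unbiasedness constraints $M^q_1\Sigma^q=I$ and $M^q_2C^q_2G^q_2=I$ to cancel $d^q_k$ from the error recursion, strong detectability as the condition for a stabilizing $\tilde{L}^q$, an LMI/Riccati-type $\mathcal{H}_\infty$ synthesis for the remaining gain freedom, and bounded radii from Schur stability of the closed-loop error matrix. Your posited error structure is corroborated by the machinery this paper reuses in Lemma~\ref{lem:resdef}, with the one minor correction that the closed-loop matrices are $A^q_e=(I-\tilde{L}^qC^q_2)\overline{A}^q$ with $\overline{A}^q=(I-G^q_2M^q_2C^q_2)(A-G^q_1M^q_1C^q_1)$, i.e., $\overline{A}^q$ also carries the input-estimation correction $G^q_1M^q_1C^q_1$, not only $A$, $G^q_2$, $M^q_2$, $C^q_2$.
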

\subsubsection{Mode Estimation Observer}
To estimate the set of compatible modes, we consider an elimination approach that compares residual signals against some thresholds. Specifically, we will eliminate a specific mode $q$, if $\|r^{q}_k\|_2>\hat{\delta}^{q}_{r,k}$, where the residual signal $r^{q}_k$ is defined as follows and the thresholds $\hat{\delta}^{q}_{r,k}$ will be derived in Section  \ref{sec:MainResult}.
%The mode estimation observer, which is summarized in Algorithm \ref{algorithm1}, eliminates incompatible modes, using \emph{residual} signals.
\vspace{-0.1cm}
\begin{defn} [Residuals] \label{defn:computedresidual}
%Suppose that $q$ is a compatible mode with some generic upper bound 
For each mode $q$ at time step $k$, % considering computed observation $z^q_{2,k}$, for each attack mode $q$, 
 the residual signal is defined as:
\begin{align}
\nonumber r^{q}_k \triangleq z^{q}_{2,k}-C^{q}_{2} \hat{x}^{\star,q}_{k|k}-D^{q}_{2} u^{q}_k.%=C^q_2 \tilde{x}^{\star,q}_{k|k}+v^q_{2,k}=\mathbb{A}^q_k {t}_k
\end{align}
\end{defn}
%Algorithm \ref{algorithm1} eliminates a specific mode $q$ at time $k$, if $\|r^{q}_k\|_2>\hat{\delta}^{q}_{r,k}$ based on Theorem \ref{thm:online_mode_elimination}, which will be provided in Section \ref{sec:MainResult}, as well as the formal definition and how to compute $\hat{\delta}^{q}_{r,k}$.
\subsubsection{Global Fusion Observer}
Then, combining the outputs of both components above, our proposed global fusion observer will provide
 mode, unknown input and state set-valued estimates at each time step $k$  as:
\begin{align*}
\begin{array}{c}
\hat{\mathbb{Q}}_k=\{q \in \mathbb{Q} \ \vline \ \|r^q_k\|_2 \leq \hat{\delta}^q_{r,k} \},
\\ \hat{D}_{k-1}=\cup_{q \in \hat{\mathbb{Q}}_k} D^q_{k-1},\  \hat{X}_k=\cup_{q \in \hat{\mathbb{Q}}_k} X^q_{k}.
\end{array}
\end{align*} 

\noindent The %above 
multiple-model approach is summarized 
 in Algorithm \ref{algorithm1}.

%%%%%%%%%%%%%%%%%%%%%%%%%%%%%%%%%%%%%%%%%%%%%%%%%%%%%%%%%%%%%%%%%%%%%%%%%%%%%%%
 \begin{algorithm}[!t] \small
\caption{Simultaneous Mode, State and Input Estimation }\label{algorithm1}
\begin{algorithmic}[1]
  \State $\hat{\mathbb{Q}}_0=\mathbb{Q}$;
  \For {$k =1$ to $N$}
  \For {$q \in \hat{\mathbb{Q}}_{k-1}$}
  \LineComment{Mode-Matched State and Input Set-Valued Estimates}
 \Statex \hspace{0.4cm} Compute $T^q_2,M^q_{1},M^q_{2},\tilde{L}^q,\hat{x}^{\star,q}_{k|k},\hat{X}^{q}_{k},\hat{D}^q_{k-1}$ via Theorem \ref{thm:filterbanks};%$\mathbb{Q}_k=\emptyset$ 
\Statex \hspace{0.4cm} $z^q_{2,k}=T^q_2y_k$;
 \LineComment{Mode Observer via Elimination}
 \Statex \hspace{0.4cm} $\hat{\mathbb{Q}}_k=\hat{\mathbb{Q}}_{k-1}$;
% \Statex \hspace{0.5cm} Compute $\hat{x}^{\star,q}$ via Theorem \ref{} and set $z^q_{2,k}=T^q_2y_k$;  
 \Statex \hspace{0.4cm} Compute $r^q_k$ via Definition \ref{defn:computedresidual} and $\hat{\delta}^q_{r,k}$ via Theorem \ref{thm:resid_comp_up_bound};
 %\Statex \hspace{0.5cm} Compute $\hat{\delta}^q_{r,k}$ via Theorem \ref{};
 %\Statex \hspace{0.5cm}
  \If {$\|r^q_k\|_2>\hat{\delta}^q_{r,k}$} $\hat{\mathbb{Q}}_k=\hat{\mathbb{Q}}_{k} \backslash \{q\}$;% \Else \  $\hat{\mathbb{Q}}_k=\{q\} \cup \hat{\mathbb{Q}}_k$;
  %\Else \  $\mathbb{Q}_k=\{q\} \cup \mathbb{Q}_k$; 
 \EndIf
 \EndFor
 \LineComment{State and Input Estimates}
 \State $\hat{X}_k=\cup_{q \in \hat{\mathbb{Q}}_k} \hat{X}^q_k$; \ $\hat{D}_k=\cup_{q \in \hat{\mathbb{Q}}_k} \hat{D}^q_k$;
 \EndFor
\end{algorithmic}
\end{algorithm}
%%%%%%%%%%%%%%%%%%%%%%%%%%%%%%%%%%%%%%%%%%%%%%%%%%%%%%%%%%%%%%%%%%%%%%%%%%%%%%%%%%%%
\vspace{-0.1cm}
\subsection{Mode Elimination Approach} \label{sec:MainResult}
\vspace{-0.05cm}
The idea is simple. If the residual signal of a particular mode exceeds its upper bound conditioned on this mode being true, we can conclusively rule it out as incompatible.
To do so, for each mode $q$, we first compute an upper bound ($\hat{\delta}^q_{r,k}$) for the 2-norm of its corresponding residual at time $k$, conditioned on $q$ being the \emph{true} mode. Then, comparing the 2-norm of residual signal in Definition \ref{defn:computedresidual} with $\hat{\delta}^q_{r,k}$, we can eliminate mode $q$ if the residual's 2-norm is strictly greater than the upper bound. 
%, by Theorem \ref{thm:online_mode_elimination}, we infer that we can eliminate mode $q$ with certainty, i.e., $q$ is \emph{not} the true mode at time $k$. 
This can be formalized using the following proposition and theorem. %o assert it more formally, we first need the following proposition.
\begin{prop} \label{prop:residecomposition}
Consider mode $q$ at time step $k$, its residual signal $r^q_k$ (as defined in Definition \ref{defn:computedresidual}) and the unknown true mode $q^{*}$. Then, %$r^q_k=r^{q|*}_k+

\vspace{-0.35cm}
\begin{small}
\begin{align}
%\nonumber &r^q_k=r^{q|*}_k+\Delta r^{q|q*}_k, 
\nonumber &r^q_k=r^{q|*}_k+\Delta r^{q|q*}_k,  \textstyle{where}
\\ \nonumber &r^{q|*}_k \triangleq z^{q*}_{2,k}-C^q_2 \hat{x}^{\star,q}_{k|k}-D^q_2 u^q_{k}=T^{q*}_2y_k-C^q_2\hat{x}^{\star,q}_{k|k}-D^q_2 u^q_{k},
\\ \nonumber &\Delta r^{q|q*}_k \triangleq (T^q_2-T^{q*}_2)y_k,
 \end{align}
 \end{small}
% \vspace{-0.1cm}
 \yong{\!\!\!\! where $r^{q|*}_k$ %can be interpreted as 
is the true mode's residual signal (i.e., $q=q^*$), %and
 %\begin{align}
 %\nonumber and \ \Delta r^{q|q*}_k \triangleq (T^q_2-T^{q*}_2)y_k,
 %\end{align}
and $\Delta r^{q|q^*}_k$ %can be considered a 
is the \emph{residual error}.}

 \end{prop}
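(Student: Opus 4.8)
The plan is to prove the decomposition by a direct add-and-subtract argument, since the claim is an algebraic identity rather than a statement requiring any analytic estimate. The only structural fact I would invoke is the construction of the transformed output in the mode-matched observer, namely that for every mode $q$ the signal fed into the residual is $z^q_{2,k} = T^q_2 y_k$ (this is exactly the assignment made in Algorithm~\ref{algorithm1}). The same relation applied to the true mode gives $z^{q*}_{2,k} = T^{q*}_2 y_k$, which is what appears in the definition of $r^{q|*}_k$.

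First I would rewrite the computed residual of Definition~\ref{defn:computedresidual} using this relation, obtaining $r^q_k = T^q_2 y_k - C^q_2 \hat{x}^{\star,q}_{k|k} - D^q_2 u^q_k$. Note that the state estimate $\hat{x}^{\star,q}_{k|k}$, the matrices $C^q_2, D^q_2$, and the known input $u^q_k$ are all mode-$q$ quantities and are untouched by which output transform is used; only the $y_k$-dependent term carries a $T$-matrix. This is the key observation that makes the splitting clean: the discrepancy between $r^q_k$ and the true-mode residual $r^{q|*}_k$ lives entirely in the transformed-output term.

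Next I would add and subtract $T^{q*}_2 y_k$ inside this expression, i.e.\ write $T^q_2 y_k = T^{q*}_2 y_k + (T^q_2 - T^{q*}_2) y_k$, and regroup. Collecting $T^{q*}_2 y_k - C^q_2 \hat{x}^{\star,q}_{k|k} - D^q_2 u^q_k$ reproduces exactly $r^{q|*}_k$, while the leftover term $(T^q_2 - T^{q*}_2) y_k$ is precisely $\Delta r^{q|q*}_k$ as defined. This yields $r^q_k = r^{q|*}_k + \Delta r^{q|q*}_k$, completing the proof.

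There is essentially no analytical obstacle here; the content is purely bookkeeping. The one place to be careful is the interpretation claim that $r^{q|*}_k$ is genuinely the true mode's residual: this is consistent because when $q = q^*$ we have $T^q_2 = T^{q*}_2$, so $\Delta r^{q|q*}_k = 0$ and $r^q_k = r^{q|*}_k$, confirming that the residual error term captures exactly the mismatch introduced by applying the wrong mode's output transform $T^q_2$ to the measured output instead of the true one $T^{q*}_2$. This decomposition is what will later let us bound $\|r^q_k\|_2$ by separately controlling the true-mode residual (a quantity governed by the mode-matched observer's error dynamics from Theorem~\ref{thm:filterbanks}) and the deterministic residual-error term.
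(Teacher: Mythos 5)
Your proof is correct and follows essentially the same route as the paper, which simply notes that the decomposition follows by substituting $z^q_{2,k}=T^q_2 y_k$ into Definition~\ref{defn:computedresidual} and regrouping --- your add-and-subtract of $T^{q*}_2 y_k$ is exactly that substitution made explicit. The consistency check that $\Delta r^{q|q*}_k=0$ when $q=q^*$ is a nice touch (the paper uses precisely this fact later, in the proof of Theorem~\ref{thm:online_mode_elimination}), but it is not a departure from the paper's argument.
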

  \vspace{-0.2cm}
 \begin{proof}
This follows directly from plugging the above expressions into the right hand side term of Definition \ref{defn:computedresidual}.
 \end{proof}
 \vspace{-0.3cm}
 \begin{thm} \label{thm:online_mode_elimination}
 Consider mode $q$ and its residual signal $r^q_k$ at time step $k$.
 Assume that $\delta^{q,*}_{r,k}$ is any signal that satisfies $\| r^{q|*}_k\|_2 \leq \delta^{q,*}_{r,k}$, % \ \forall k $, %, i.e., $\delta^{q,*}_{r,k}$ is an upper envelop signal for $\| r^{q|*}_k\|_2$,
  where $r^{q|*}_k$ is defined in Proposition \ref{prop:residecomposition}. Then, mode $q$ is not the true mode, i.e., can be eliminated at time $k$, if $\| r^q_k \|_2 > \delta^{q,*}_{r,k}.$
 \end{thm}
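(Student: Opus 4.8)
The plan is to argue by contraposition, leveraging the residual decomposition already established in Proposition \ref{prop:residecomposition}. Concretely, I will show that whenever mode $q$ is in fact the true mode, its residual $r^q_k$ necessarily respects the bound $\delta^{q,*}_{r,k}$; the stated elimination criterion is then exactly the logical contrapositive of this implication.

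First I would assume that $q$ coincides with the true mode, i.e.\ $q=q^*$. Under this hypothesis the two output transformations agree, $T^q_2=T^{q*}_2$, so the residual error term in Proposition \ref{prop:residecomposition} vanishes: $\Delta r^{q|q^*}_k=(T^q_2-T^{q*}_2)y_k=0$. Substituting into the decomposition $r^q_k=r^{q|*}_k+\Delta r^{q|q^*}_k$ then yields $r^q_k=r^{q|*}_k$, whence $\|r^q_k\|_2=\|r^{q|*}_k\|_2\leq\delta^{q,*}_{r,k}$, the last step being precisely the assumed defining property of $\delta^{q,*}_{r,k}$. Thus the true mode always satisfies $\|r^q_k\|_2\leq\delta^{q,*}_{r,k}$, and taking the contrapositive shows that if $\|r^q_k\|_2>\delta^{q,*}_{r,k}$ then $q\neq q^*$, so mode $q$ may be soundly eliminated at time $k$.

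There is no genuinely hard computational step in this particular statement: once Proposition \ref{prop:residecomposition} is available, the argument reduces to a single logical implication. The only point requiring care is the standing hypothesis that $\delta^{q,*}_{r,k}$ is a valid upper bound on $\|r^{q|*}_k\|_2$; since $r^{q|*}_k$ is defined through $T^{q*}_2 y_k$ and thus implicitly involves the unknown true mode, such a bound must be computable from mode-$q$ quantities alone (the noise bounds $\eta_w,\eta_v$ together with the estimation-error radii of Theorem \ref{thm:filterbanks}). Producing such a tractable $\delta^{q,*}_{r,k}$ is exactly the work deferred to Theorem \ref{thm:resid_comp_up_bound}, which is where the real effort lies. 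Finally, I would stress the scope of the conclusion: it certifies that the true mode is never erroneously discarded (soundness of elimination), but it does not by itself guarantee that every false mode is eventually removed (completeness); the latter is the subject of the separate sufficient conditions developed afterward.
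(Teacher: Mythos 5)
Your proof is correct and follows essentially the same route as the paper's: assume $q=q^*$, note $T^q_2=T^{q*}_2$ so that $\Delta r^{q|q^*}_k=0$ in Proposition \ref{prop:residecomposition}, and conclude $\|r^q_k\|_2=\|r^{q|*}_k\|_2\leq\delta^{q,*}_{r,k}$. Casting it as contraposition rather than the paper's contradiction is only a cosmetic repackaging of the identical argument, and your closing remarks on soundness versus completeness correctly delimit the theorem's scope.
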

 \vspace{-0.2cm}
 \begin{proof}%[Proof of Theorem \ref{thm:online_mode_elimination}]
 To use contradiction, suppose $q$ is the true mode. By uniqueness of the true mode $q=q^*$, so $T^q_2=T^{q*}_2$ and by Proposition \ref{prop:residecomposition}, $\Delta r^{q|q*}_k=0$ and hence $\|r^q_k \|_2=\|r^{q|*}_k \|_2 \leq \delta^{q,*}_{r,k}$, which contradicts with the assumption. % in theorem's statement. This completes the proof.
 \end{proof}
% \vspace{-0.3cm}
%  \begin{rem}
%$r^{q|*}_k$ %can be interpreted as 
%\yong{is} the true mode's residual signal (i.e., $q=q^*$), %and
% %\begin{align}
% %\nonumber and \ \Delta r^{q|q*}_k \triangleq (T^q_2-T^{q*}_2)y_k,
% %\end{align}
%while $\Delta r^{q|q^*}_k$ %can be considered a 
%\yong{is the} \emph{residual error}.
% \end{rem}
 
  \vspace{-0.1cm}
 \subsection{Tractable Computation of Thresholds} % $\hat{\delta}^q_{r,k}$}
\vspace{-0.05cm}
Theorem \ref{thm:online_mode_elimination} provides %us 
a sufficient condition for mode elimination at each time step. %In order to be able 
To apply this sufficient condition, we need to compute an % \emph{bounded} 
upper bound for $\| r^{q|*}_k\|_2$, i.e., our $\delta^{q,*}_{r,k}$ signal (\yong{cf.} Theorem \ref{thm:resid_comp_up_bound}) and show that it is bounded in the following lemmas. 
 %We do this task in three steps. First, we find an expression for $ r^{q|*}_k$ in terms of noise signals and initial state through Lemma \ref{lem:resdef}, where we also derive an expression for $ r^{q}_k$ in terms of noise signals, initial state and unknown inputs. 
 %\color{red}For this purpose, we primarily need to find an expression for $ r^{q|*}_k$ in terms of noise signals and initial state and to prove that $\| r^{q|*}_k\|_2$ at each time step is bounded through the following Lemmas \ref{lem:resdef} and \ref{lem:existance}, respectively. \color{black}% in the following theorem.
 %Note that we also derive an expression for $ r^{q}_k$ in terms of noise signals, initial state and unknown inputs, that will be used later in . 
% and finally, we compute an upper bound for it via Theorem \ref{thm:resid_comp_up_bound}.  
 \vspace{-0.2cm}
\begin{lem} \label{lem:resdef}
Consider any mode $q$ with the unknown true mode being $q^{*}$. Then, at time step $k$, we have  %is the true mode, then the residual signal at time step $k$ can be obtained as
\begin{align}
 r^{q|*}_k &= C^q_2 \tilde{x}^{\star,q}_{k|k}+v^q_{2,k}=\mathbb{A}^q_k {t}_k, \label{eq:resid_ideal}
%\\ r^{q}_k &= \begin{bmatrix} \mathbb{T}^{q,q^*}_k & \mathbb{B}^{q,q^*}_k & \mathbb{D}^{q,q^*}_k  \end{bmatrix} \begin{bmatrix} t^{\top}_k & u^{q^*\top}_{0:k} & d^{\top}_{0:k} \end{bmatrix}^{\top}, \label{eq:resid_comp}
 \end{align}
 where\begin{small} ${t}_k \hspace{-0.1cm}\triangleq \hspace{-0.1cm}\begin{bmatrix} \tilde{x}^{\top}_{0|0} & w^{\top}_0 & \dots & w^{\top}_{k-1} & v^{\top}_0 & \hspace{-0.1cm}\dots\hspace{-0.1cm} & v^{\top}_k \end{bmatrix}^{\top} \hspace{-0.2cm}\in \mathbb{R}^{\moh{(n+l)(k+1)}}$, \end{small}
 \vspace{-0.15cm}
 \begin{small}
 \begin{align}
% \nonumber {t}_k \triangleq& \begin{bmatrix} \tilde{x}^{\top}_{0|0} & w^{\top}_0 & \dots & w^{\top}_{k-1} & v^{\top}_0 & \dots & v^{\top}_k \end{bmatrix}^{\top} \in \mathbb{R}^{\moh{(n+l)(k+1)}},\\
% \\ \nonumber &u^{q^*}_{0:k} \triangleq \begin{bmatrix} u^{q*\top}_k & u^{q*\top}_{k-1} & \dots u^{q*\top}_0  \end{bmatrix}^{\top}, d_{0:k} \triangleq \begin{bmatrix} d^{\top}_k & d^{\top}_{k-1} & \dots d^{\top}_0  \end{bmatrix}^{\top}
% \end{align}
%
% \begin{align}
\quad \quad \mathbb{A}^q_k \triangleq&
%\begin{align} 
%\begin{small}
\nonumber  [ C^q_2 \overline{A}^q {A^q_e}^{k-1}  \vline  C^q_2 \overline{A}^q {A^q_e}^{k-2}B^q_{e,w} \vline  C^q_2 \overline{A}^q {A^q_e}^{k-2}B^q_{e,w}  \dots  \\ \nonumber &C^q_2 \overline{A}^q {A^q_e}^{k-1-i}B^q_{e,w}  \vline \dots \vline  C^q_2 \overline{A}^q {A^q_e}B^q_{e,w}  \vline  C^q_2 B^{\star,q}_{e,w}  \vline  \\ \nonumber &C^q_2 \overline{A}^q {A^q_e}^{k-2}B^q_{e,v_1}  \vline  C^q_2 \overline{A}^q {A^q_e}^{k-2}(B^q_{e,v_1}+ {A^q_e}B^q_{e,v_2}) \dots
\\ \nonumber  &C^q_2 \overline{A}^q {A^q_e}^{k-1-i}(B^q_{e,v_1}+ {A^q_e}B^q_{e,v_2})  \vline  \dots  \vline  
 \\ \nonumber &C^q_2 \overline{A}^q {A^q_e}(B^q_{e,v_1}+ {A^q_e}B^q_{e,v_2})  \vline  C^q_2 (B^{q, \star}_{e,v_1}+ \overline{A}^q B^q_{e,v_2})  \vline
 \\ \nonumber &C^q_2 B^{q, \star}_{e,v_2}+T^q_2 ] \in \mathbb{R}^{(l-p_{H^q}) \times \moh{(n+l)(k+1)}},
 %\end{small}
%\end{bmatrix}
\end{align}
 \end{small}
%$\hat{A}^q\triangleq A-G^q_{1}M^q_{1} C^q_{1}$, $\Phi\triangleq I-G^q_2 M^q_2 C^q_2$, 
%\begin{align}
%\nonumber \overline{A}^q&\triangleq (I-G^q_2 M^q_2 C^q_2)(A-G^q_{1}M^q_{1} C^q_{1}), %$V_e\triangleq V_1M_1C_1+V_2M_2C_2\hat{A}$, 
%$A_e\triangleq (I-\tilde{L}C_2)\overline{A}$, 
%\\ \nonumber A^q_e&\triangleq (I-\tilde{L}^qC^q_2)\overline{A}^q, B^{\star,q}_{e,w}\triangleq (I-G^q_2 M^q_2 C^q_2),
% \\ \nonumber B^{\star,q}_{e,v1}&\triangleq -(I-G^q_2 M^q_2 C^q_2) (G^q_1 M^q_1 T^q_1), 
 % \\ \nonumber B^q_{e,w}&\triangleq (I-\tilde{L}^qC^q_2)B^{\star,q}_{e,w}, B^q_{e,v1}\triangleq (I-\tilde{L}^qC^q_2)B^{\star,q}_{e,v1}, 
   %\\ \nonumber B^q_{e,v2}&\triangleq (I-\tilde{L}^qC^q_2)B^{\star,q}_{e,v2}-\tilde{L}^qT^q_2, B^{\star,q}_{e,v2}\triangleq -G^q_2 M^q_2T^q_2,
% \end{align}

\vspace{-0.4cm}\noindent with \begin{small}$\overline{A}^q\triangleq \hspace{-0.1cm}(I\hspace{-0.05cm}-\hspace{-0.05cm}G^q_2 M^q_2 C^q_2)(A-G^q_{1}M^q_{1} C^q_{1})$, \hspace{-0.15cm}$A^q_e\hspace{-0.1cm}\triangleq \hspace{-0.1cm}(I\hspace{-0.05cm}-\hspace{-0.05cm}\tilde{L}^qC^q_2)\overline{A}^q, B^{\star,q}_{e,w}\hspace{-0.1cm}\triangleq \hspace{-0.1cm}(I\hspace{-0.1cm}-\hspace{-0.1cm}G^q_2 M^q_2 C^q_2)$, $B^{\star,q}_{e,v1}\hspace{-0.1cm}\triangleq \hspace{-0.1cm}-(I\hspace{-0.1cm}-\hspace{-0.1cm}G^q_2 M^q_2 C^q_2) (G^q_1 M^q_1 T^q_1)$, \hspace{-0.1cm}$B^q_{e,w}\hspace{-0.1cm}\triangleq \hspace{-0.1cm}(I-\tilde{L}^qC^q_2)B^{\star,q}_{e,w}, B^q_{e,v1}\hspace{-0.1cm}\triangleq\hspace{-0.1cm} (I\hspace{-0.1cm}-\hspace{-0.1cm}\tilde{L}^qC^q_2)B^{\star,q}_{e,v1}$ and $B^q_{e,v2}\hspace{-0.1cm}\triangleq \hspace{-0.1cm}(I\hspace{-0.1cm}-\hspace{-0.1cm}\tilde{L}^qC^q_2)B^{\star,q}_{e,v2}\hspace{-0.1cm}-\hspace{-0.1cm}\tilde{L}^qT^q_2, B^{\star,q}_{e,v2}\hspace{-0.1cm}\triangleq \hspace{-0.1cm}-G^q_2 M^q_2T^q_2$.\end{small} 
\end{lem}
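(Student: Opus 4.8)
The plan is to prove the two equalities in \eqref{eq:resid_ideal} separately: first the ``static'' identity $r^{q|*}_k = C^q_2 \tilde{x}^{\star,q}_{k|k} + v^q_{2,k}$, and then the ``unrolled'' expression $r^{q|*}_k = \mathbb{A}^q_k {t}_k$ obtained by propagating the observer error dynamics. For the first equality, I would use the defining property of the output transformation in the observer of \cite{yong2018simultaneous}, namely that $T^q_2$ annihilates the unknown-input channel, $T^q_2 H^q = 0$, so that $z^q_{2,k} = T^q_2 y_k = C^q_2 x_k + D^q_2 u^q_k + v^q_{2,k}$ with $C^q_2 \triangleq T^q_2 C$, $D^q_2 \triangleq T^q_2 D$ and $v^q_{2,k} \triangleq T^q_2 v_k$. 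Conditioning on $q$ being the true mode (so $q=q^*$ and $z^{q*}_{2,k}=z^q_{2,k}$) and substituting this into the expression for $r^{q|*}_k$ in Proposition \ref{prop:residecomposition} cancels the $D^q_2 u^q_k$ terms and leaves $r^{q|*}_k = C^q_2(x_k-\hat{x}^{\star,q}_{k|k}) + v^q_{2,k} = C^q_2 \tilde{x}^{\star,q}_{k|k} + v^q_{2,k}$, the first claimed identity.

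Next I would establish the one-step error recursions. Substituting the true dynamics \eqref{eq:sys_desc} (with $q=q^*$) into the three observer steps and using the gain constraints $M^q_1\Sigma^q=I$ and $M^q_2 C^q_2 G^q_2 = I$ from Theorem \ref{thm:filterbanks} to cancel the unknown-input contributions, the pre- and post-measurement-update errors satisfy
\begin{align}
\nonumber \tilde{x}^{\star,q}_{k|k} &= \overline{A}^q \tilde{x}^q_{k-1|k-1} + B^{\star,q}_{e,w} w_{k-1} + B^{\star,q}_{e,v1} v_{k-1} + B^{\star,q}_{e,v2} v_k,\\
\nonumber \tilde{x}^q_{k|k} &= A^q_e \tilde{x}^q_{k-1|k-1} + B^q_{e,w} w_{k-1} + B^q_{e,v1} v_{k-1} + B^q_{e,v2} v_k,
\end{align}
with $\overline{A}^q$, $A^q_e$ and the matrices $B^{\star,q}_{e,w},B^{\star,q}_{e,v1},B^{\star,q}_{e,v2},B^q_{e,w},B^q_{e,v1},B^q_{e,v2}$ exactly as defined in the statement; here the relations $A^q_e=(I-\tilde{L}^qC^q_2)\overline{A}^q$ and $B^q_{e,\bullet}=(I-\tilde{L}^qC^q_2)B^{\star,q}_{e,\bullet}$ simply record that the post-update error is the pre-update error after the correction \eqref{eq:stateEst}. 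This derivation mirrors the error analysis of \cite{yong2018simultaneous}.

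Finally I would unroll the second recursion from step $0$ (with initial error $\tilde{x}^q_{0|0}=\tilde{x}_{0|0}$) up to step $k-1$ and insert the result into the first recursion. Collecting terms then yields $\mathbb{A}^q_k$ block by block: the initial error acquires coefficient $C^q_2\overline{A}^q(A^q_e)^{k-1}$; each $w_{m}$ acquires $C^q_2\overline{A}^q(A^q_e)^{k-2-m}B^q_{e,w}$ for $0\le m\le k-2$ while $w_{k-1}$ acquires $C^q_2 B^{\star,q}_{e,w}$; and each interior $v_j$ receives two contributions — one as the ``current-step'' noise through $B^q_{e,v2}$ at stage $j$ and one as the ``previous-step'' noise through $B^q_{e,v1}$ at stage $j+1$ — which combine into $C^q_2\overline{A}^q(A^q_e)^{k-1-i}(B^q_{e,v1}+A^q_e B^q_{e,v2})$, whereas the boundary terms $v_{k-1}$ and $v_k$ give $C^q_2(B^{\star,q}_{e,v1}+\overline{A}^q B^q_{e,v2})$ and $C^q_2 B^{\star,q}_{e,v2}+T^q_2$ respectively, the trailing $T^q_2$ arising from the $v^q_{2,k}=T^q_2 v_k$ added in the first identity.

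The main obstacle is precisely this last bookkeeping step: correctly pairing the two occurrences of each $v_j$ (shifted by one time index and carried through the extra power of $A^q_e$) and handling the boundary indices $j=0,k-1,k$, together with verifying that the gain-constraint cancellations in the error-recursion step leave exactly the claimed $B$-matrices.
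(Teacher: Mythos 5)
Your proposal is correct and follows essentially the same route as the paper: the paper's proof likewise obtains the first equality from Definition \ref{defn:computedresidual} together with $z^q_{2,k}=C^q_2 x_k + D^q_2 u^q_k + v^q_{2,k}$ under the assumption that $q$ is the true mode, and gets the second equality by invoking the closed-form expression for $\tilde{x}^{\star,q}_{k|k}$, which the paper simply cites from \cite[Appendix C]{yong2018simultaneous} and you instead re-derive by establishing the one-step error recursions (using $M^q_1\Sigma^q=I$ and $M^q_2C^q_2G^q_2=I$ to cancel the unknown-input terms) and unrolling them. One minor remark in your favor: your interior coefficients $C^q_2\overline{A}^q(A^q_e)^{k-2-m}B^q_{e,w}$ (and the analogous $v$-blocks) are the internally consistent ones, whereas the paper's displayed formula uses the exponent $k-1-i$ in the interior sum, so that the block $C^q_2\overline{A}^q(A^q_e)^{k-2}B^q_{e,w}$ appears twice in $\mathbb{A}^q_k$ while the power-zero block $C^q_2\overline{A}^qB^q_{e,w}$ is missing --- an apparent off-by-one typo in the paper rather than a gap in your argument.
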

\vspace{-0.2cm}\begin{proof}%[Proof of Lemma \ref{lem:resdef}]%[Proof of Lemma \ref{lem:resdef}]
Considering \eqref{eq:resid_ideal}, the first equality comes from Definition \ref{defn:computedresidual} and $z^{q}_{2,k} = C^{q}_{2} x_k + D^{q}_{2,k} u^q_k + v^{q}_{2,k}$ from \cite{yong2018simultaneous}, assuming that $q$ is the true mode, and the second equality is implied by the first equality and the fact in \cite[Appendix C]{yong2018simultaneous} that %(c.f., \cite[Appendix C]{yong2018simultaneous}) %&\begin{array}{l}
\begin{align}
&\begin{array}{rl} \tilde{x}^{\star,q}_{k|k}&=\overline{A}^q{A^q_e}^{k-1} \tilde{x}_{0|0}+\overline{A}^q{A^q_e}^{k-2}\begin{bmatrix} B^q_{e,w}  B^q_{e,v1} \end{bmatrix}\vec{w}_0 %\begin{bmatrix} w_0\\ v_0 \end{bmatrix}\\
 \\ \nonumber &+B_{e,w}^{\star,q} w_{k-1} + (B_{e,v1}^{\star,q}+\overline{A}^qB^q_{e,v2}) v_{k-1} + B_{e,v2}^{\star,q} v_k
  \\ \nonumber &+\textstyle \sum_{i=1}^{k-2} \overline{A}^q{A^q_e}^{k-1-i} \begin{bmatrix} B^q_{e,w} & B^q_{e,v1}+A^q_e B^q_{e,v2} \end{bmatrix} \vec{w}_i, \end{array}
  \\ \nonumber & \vec{w}_k\triangleq \begin{bmatrix} w_k^\top & v_k^\top\end{bmatrix}^\top.  \qedhere\end{align}
     \end{proof} 
     
\begin{lem} \label{lem:existance}
For each mode $q$ at time step $k$, there exists a generic finite valued upper bound $\delta^{q}_{r,k} < \infty$ for  $ \| r^{q|*}_k \|_2$.
 \end{lem}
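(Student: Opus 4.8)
The plan is to exploit the explicit linear factorization $r^{q|*}_k = \mathbb{A}^q_k\, t_k$ established in Lemma \ref{lem:resdef} and to bound the two factors separately via the submultiplicativity of the induced $2$-norm. First I would write
\[
\|r^{q|*}_k\|_2 = \|\mathbb{A}^q_k\, t_k\|_2 \leq \|\mathbb{A}^q_k\|_2 \, \|t_k\|_2,
\]
so that the claim reduces to showing that each of the two quantities on the right is finite for every finite $k$.

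Next I would bound $\|t_k\|_2$. Since $t_k$ stacks the initial state estimation error $\tilde{x}_{0|0}$, the process-noise samples $w_0,\dots,w_{k-1}$, and the measurement-noise samples $v_0,\dots,v_k$, and since by assumption $\|w_i\|_2 \leq \eta_w$ and $\|v_i\|_2 \leq \eta_v$ for all $i$, while the initial error obeys $\|\tilde{x}_{0|0}\|_2 \leq \delta^x_0$ (a finite bound on the bounded initial state set estimate), it follows that
\[
\|t_k\|_2 = \Big( \|\tilde{x}_{0|0}\|_2^2 + \sum_{i=0}^{k-1}\|w_i\|_2^2 + \sum_{i=0}^{k}\|v_i\|_2^2 \Big)^{1/2} \leq \Big( (\delta^x_0)^2 + k\,\eta_w^2 + (k+1)\,\eta_v^2 \Big)^{1/2},
\]
which is finite for each finite $k$. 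For the matrix factor, $\mathbb{A}^q_k$ is a fixed matrix whose blocks are finite products of the constant, finite-valued system and observer-gain matrices appearing in Theorem \ref{thm:filterbanks}; hence $\|\mathbb{A}^q_k\|_2 < \infty$. Combining the two bounds yields the explicit candidate
\[
\delta^q_{r,k} \triangleq \|\mathbb{A}^q_k\|_2 \, \Big( (\delta^x_0)^2 + k\,\eta_w^2 + (k+1)\,\eta_v^2 \Big)^{1/2} < \infty,
\]
which is the desired finite upper bound.

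The main obstacle is not the finiteness at a fixed $k$, which is elementary once Lemma \ref{lem:resdef} is in hand, but rather making the bound useful for the subsequent analysis: one must ensure $\tilde{x}_{0|0}$ is bounded (guaranteed by the bounded initial set estimate) and, more importantly, one would invoke the stability of $A^q_e$ from Theorem \ref{thm:filterbanks} --- its spectral radius being strictly less than one --- so that the powers ${A^q_e}^{\,j}$ embedded in $\mathbb{A}^q_k$ decay geometrically and $\|\mathbb{A}^q_k\|_2$ stays bounded uniformly in $k$. This geometric decay is precisely what will later upgrade the per-step bound into a convergent sequence; for the present existence lemma, however, only the elementary submultiplicative estimate is needed.
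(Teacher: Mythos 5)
Your proof is correct, but it takes a genuinely different route from the paper's. The paper proves Lemma \ref{lem:existance} non-constructively: it \emph{defines} $\delta^q_{r,k}$ as the optimal value of the norm-maximization $\max_{t_k}\|\mathbb{A}^q_k t_k\|_2$ over the constraint set $\{\|\tilde{x}_{0|0}\|_2\leq\delta^x_0,\ \|w_i\|_2\leq\eta_w,\ \|v_j\|_2\leq\eta_v\}$, observes that this set is compact and the objective continuous, and invokes the Weierstrass theorem to conclude the maximum is attained and finite. This pins down the \emph{tightest} possible bound, which matters downstream: Theorem \ref{thm:resid_comp_up_bound} is stated as an over-approximation of precisely this extremal value (whose exact computation the paper then notes is NP-hard). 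You instead construct an explicit bound by submultiplicativity,
\[
\|r^{q|*}_k\|_2 \leq \|\mathbb{A}^q_k\|_2\sqrt{(\delta^x_0)^2+k\,\eta_w^2+(k+1)\,\eta_v^2},
\]
which proves the bare existence claim just as rigorously and has the advantage of being immediately computable; in effect it is a coarser cousin of the paper's $\delta^{q,tri}_{r,k}$, which applies the triangle and submultiplicative inequalities block by block, pairing each block of $\mathbb{A}^q_k$ with its per-sample noise bound ($\eta_w$ or $\eta_v$) rather than with the aggregate $\|t_k\|_2$. What you lose is the extremal characterization (your $\delta^q_{r,k}$ is not the quantity the rest of the paper over-approximates) and tightness. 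One caveat on your closing remark: while $\|\mathbb{A}^q_k\|_2$ is indeed uniformly bounded in $k$ by the stability of $A^q_e$, your candidate bound still grows like $\sqrt{k}$ through the $\|t_k\|_2$ factor, so it does not itself converge; the convergent sequence obtained later (Lemma \ref{lem:delta_inf_diverge}) is the block-wise $\delta^{q,tri}_{r,k}$, not any bound of the form $\|\mathbb{A}^q_k\|_2\|t_k\|_2$. For the existence statement as posed, however, your argument is complete.
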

  \vspace{-0.2cm}\begin{proof}%[Proof of Lemma \ref{lem:existance}]
 Consider the following optimization problem for \moh{$ \| r^{q|*}_k \|_2$} by leveraging Lemma \ref{lem:resdef}:
% \begin{problem}\label{problem:modeelimination}
 %For $\mathbb{A}^q_k$ and ${t}_k$ as defined in  Lemma \ref{lem:resdef}, solve:
\vspace{-0.05cm} \begin{align} \label{eq:genericupperbopund}
  &\delta^{q}_{r,k} \triangleq \max \limits_{t_k} \| \mathbb{A}^q_k {t}_k \|_2 %,t_k \triangleq [\tilde{x}_{0|0} \ w_0 \ ... \ w_{k-1} \ v_0 \ ... \ v_k] \top
 \\ \nonumber  &s.t. \ t_k = \begin{bmatrix} \tilde{x}^{\top}_{0|0} & w^{\top}_0 & \dots & w^{\top}_{k-1} & v^{\top}_0 & \dots & v^{\top}_k \end{bmatrix}^{\top},  
 \\ \nonumber &\| \tilde{x}_{0|0} \|_2 \leq \delta^x_0, \ \|w_i\|_2 \leq \eta_w, \ \| v_j \|_2 \leq \eta_v,
 \\ \nonumber &i \in \{ 0,...,k-1 \}, \ j \in \{ 0,...,k \}.%\mathbb{A}^q_k \ is \ defined \ in \ Lemma \ref{lem:resdef}.
 \end{align}
 %\end{problem}
 The objective 2-norm function is continuous and the constraint set is an intersection of level sets of lower dimensional norm functions, which is closed and bounded\moh{,} so is compact. Hence, by Weierstrass Theorem \cite[Proposition 2.1.1]{bertsekas2003convex}, the objective function attains its maxima on the constraint set and so a finite-valued upper bound exists. %This completes the proof.  
 \end{proof}
 
%\vspace{0.1cm}
%\begin{bmatrix} w_i\\ v_i \end{bmatrix},
%  \end{align} (cf. \cite[Appendix C]{yong2018simultaneous}).
%  Moreover, \eqref{eq:resid_comp} is implied by applying Proposition \ref{prop:residecomposition}, euation \eqref{eq:resid_ideal} and the closed from expression for the output signal
%  
%  \begin{tiny}
%  \begin{align}
%  \nonumber y_k&=\begin{bmatrix} \begin{bmatrix} (CA^k)^{\top} \\ (CA^{k-1})^{\top} \\ \vdots \\ C^{\top} \\ I \end{bmatrix}^{\top} & \begin{bmatrix} H^{\top} \\ (CG)^{\top} \\ (CAG)^{\top} \\ \vdots \\ (CA^{k-1}G)^{\top} \end{bmatrix}^{\top} 
%  & \begin{bmatrix} D^{\top} \\ (CB)^{\top} \\ (CAB)^{\top} \\ \vdots \\ (CA^{k-1}B)^{\top} \end{bmatrix}^{\top}  \end{bmatrix} \begin{bmatrix} t_k \\ \\ d^{q*}_{0:k} \\ \\ u^{q*}_{0:k} \end{bmatrix}%\begin{bmatrix}
%  \end{align}
%  \end{tiny}
%  which can be derived by using system \eqref{eq:sys_desc}'s equations and induction. This, completes the proof.

%Providing the computed residual's upper bound, consider the following Lemma:

%Now, having a closed-form formulation for $r^{q|*}_k$, we show that there exists a \emph{finite} upper bound for its 2-norm, at each time $k$ , through  Lemma \ref{lem:existance}.
%we pretend that the specific mode $q$ is the true mode, and try to provide an upper bound for its residual signal.
\vspace{-0.2cm}

% \vspace{-0.65cm}
Clearly $\delta^{q}_{r,k}$ in Lemma \ref{lem:existance} is the \emph{tightest} possible residual norm's upper bound %\color{red}(c.f \eqref{eq:genericupperbopund} in Appendix)\color{black}  
and potentially can eliminate the most possible number of modes, so is the best choice if we can calculate it. But, notice that although it was straight forward to show that a finite-valued $\delta^{q}_{r,k}$ exists, but since the optimization problem in Lemma \ref{lem:existance} is a \emph{norm maximization} (not minimization) over the intersection of level sets of lower dimensional norm functions, i.e., a non-concave maximization over intersection of quadratic constraints, it is an NP-hard problem \cite{bodlaender1990computational}. To tackle with this complexity, we provide an over-approximation for $\delta^{q}_{r,k}$ in the following Theorem \ref{thm:resid_comp_up_bound}, which we call $\hat{\delta}^q_{r,k}$.% which are of course -more conservative/wider upper bounds for residuals' norms. 
\begin{thm} \label{thm:resid_comp_up_bound}
Consider mode $q$. At time step $k$, let
\begin{small}
\begin{align}
\nonumber &\hat{\delta}^q_{r,k} \triangleq \min \{ \delta^{q,inf}_{r,k},{\delta}^{q,tri}_{r,k} \}, %is an over-approximation for $\delta^{q*}_{r,k},
 \\ \nonumber &\delta^{q,inf}_{r,k} \triangleq \| \mathbb{A}^q_k {t}^{\star}_k \|_2,
\\  \nonumber &\delta^{q,tri}_{r,k} \triangleq \delta^{x,q}_0 \| C^q_2 \overline{A}^q {A^q_e}^{k-1} \|_2  +\eta_w \| C^q_2 \overline{A}^q {A^q_e}^{k-2} \|_2+%\label{eq:trianglenorm}
 \\ \nonumber &\ \ \textstyle\sum_{i=1}^{k-2} [\eta_w\| C^q_2 \overline{A}^q {A^q_e}^{i}B^q_{e,w} \|_2 \hspace{-0.1cm}+\hspace{-0.1cm}\eta_v \| C^q_2 \overline{A}^q {A^q_e}^{i}(B^q_{e,v_1}\hspace{-0.1cm}+\hspace{-0.1cm} {A^q_e}B^q_{e,v_2}) \|_2] %+\|C^q_2 \overline{A}^q {A^q_e}^{k-2}B^q_{e,v_1} \|_2
 \\ \nonumber &\quad +\eta_v(\|C^q_2 \overline{A}^q {A^q_e}^{k-2}B^q_{e,v_1} \|_2+\|C^q_2 (B^{q, \star}_{e,v_1}+ \overline{A}^q B^q_{e,v_2}) \|_2)
 \\ \nonumber &\quad +\|C^q_2 B^{q, \star}_{e,v_2}+T^q_2 \|_2)+\eta_w \| C^q_2 B^{\star,q}_{e,w} \|_2, 
 \end{align}
 \end{small}
 where ${t}^{\star}_k$ is a vertex of the following hypercube: 

\vspace{-0.3cm}%\noindent where ${t}^{\star}_k$ is a vertex of the following hypercube: 
 \begin{align*}
 \begin{array}{l}
\mathcal{X}^q_k \triangleq \big\{ x \in \mathbb{R}^{\moh{(n+l)(k+1)}} \ \vline \\
| x(i) | \leq \begin{cases} \delta^x_0, 1 \leq i \leq n \\ \eta_w, n+1 \leq i \leq \moh{n(k+1)} \\ \eta_v, \moh{n(k+1)}+1 \leq i \leq \moh{(n+l)(k+1)}  \end{cases}\big\},\end{array}
 \end{align*}
 %\end{align},
  i.e., \\ ${t}^{\star}_k(i) \in \begin{cases} \{-\delta^x_0, \delta^x_0 \}, 1 \leq i \leq n, \\ \{-\eta_w, \eta_w \}, n+1 \leq i \leq \moh{n(k+1)}, \\ \{-\eta_v, \eta_v \}, \moh{n(k+1)}+1 \leq i \leq \moh{(n+l)(k+1)}.  \end{cases} $
  \\ Then, $\hat{\delta}^q_{r,k}$ is an over-approximation for $\delta^{q}_{r,k}$ in Lemma \ref{lem:existance}.
   %and
  %\begin{small}
 %\begin{align}
 %\nonumber &\delta^{q,tri}_{r,k} \triangleq \delta^{x,q}_0 \| C^q_2 \overline{A}^q {A^q_e}^{k-1} \|_2  +\eta_w \| C^q_2 \overline{A}^q {A^q_e}^{k-2} \|_2+%\label{eq:trianglenorm}
 %\\ \nonumber &\sum_{i=1}^{k-2} [\eta_w\| C^q_2 \overline{A}^q {A^q_e}^{i}B^q_{e,w} \|_2 +\eta_v \| C^q_2 \overline{A}^q {A^q_e}^{i}(B^q_{e,v_1}+ {A^q_e}B^q_{e,v_2}) \|_2] %+\|C^q_2 \overline{A}^q {A^q_e}^{k-2}B^q_{e,v_1} \|_2
 %\\ \nonumber &+\eta_v(\|C^q_2 \overline{A}^q {A^q_e}^{k-2}B^q_{e,v_1} \|_2+\|C^q_2 (B^{q, \star}_{e,v_1}+ \overline{A}^q B^q_{e,v_2}) \|_2)
 %\\ \nonumber &+\|C^q_2 B^{q, \star}_{e,v_2}+T^q_2 \|_2)+\eta_w \| C^q_2 B^{\star,q}_{e,w} \|_2 
 %\\ \nonumber &\|C^q_2 \overline{A}^q {A^q_e}^{k-2}B^q_{e,v_1} \|_2 ).
% \end{align}
 %\end{small}
 \end{thm}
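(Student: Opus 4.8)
The plan is to establish that $\hat{\delta}^q_{r,k}$ over-approximates $\delta^q_{r,k}$ by showing that \emph{each} of the two candidate signals, $\delta^{q,inf}_{r,k}$ and $\delta^{q,tri}_{r,k}$, is individually an upper bound for $\delta^q_{r,k}$; the minimum of two upper bounds is then itself an upper bound, while being the tighter of the two. Throughout I work from the identity $r^{q|*}_k = \mathbb{A}^q_k t_k$ of Lemma \ref{lem:resdef} and the feasible set $\mathcal{F}$ defined by the 2-norm constraints in the optimization problem \eqref{eq:genericupperbopund} of Lemma \ref{lem:existance}, so that $\delta^q_{r,k} = \max_{t_k \in \mathcal{F}} \|\mathbb{A}^q_k t_k\|_2$.

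First I would handle the triangle-inequality bound $\delta^{q,tri}_{r,k}$. Partitioning $\mathbb{A}^q_k$ column-wise into the blocks multiplying $\tilde{x}_{0|0}$, each $w_i$, and each $v_j$ — exactly as listed in the definition of $\mathbb{A}^q_k$ in Lemma \ref{lem:resdef} — I would apply the triangle inequality to the resulting sum and then submultiplicativity $\|Mz\|_2 \leq \|M\|_2 \|z\|_2$ to each term. Substituting the feasibility bounds $\|\tilde{x}_{0|0}\|_2 \leq \delta^x_0$, $\|w_i\|_2 \leq \eta_w$ and $\|v_j\|_2 \leq \eta_v$ reproduces the summation defining $\delta^{q,tri}_{r,k}$. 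Since this chain of inequalities holds for every feasible $t_k \in \mathcal{F}$, taking the maximum over $\mathcal{F}$ yields $\delta^q_{r,k} \leq \delta^{q,tri}_{r,k}$.

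Next I would treat the vertex bound $\delta^{q,inf}_{r,k}$. The key observation is the containment $\mathcal{F} \subseteq \mathcal{X}^q_k$: any $t_k$ whose blocks have 2-norms bounded by $\delta^x_0$, $\eta_w$, $\eta_v$ automatically satisfies the componentwise bounds of the hypercube $\mathcal{X}^q_k$, because $|z(i)| \leq \|z\|_2$ for each component. Enlarging the maximization domain therefore gives $\delta^q_{r,k} \leq \max_{t_k \in \mathcal{X}^q_k} \|\mathbb{A}^q_k t_k\|_2$. Now $t_k \mapsto \|\mathbb{A}^q_k t_k\|_2$ is convex and $\mathcal{X}^q_k$ is a compact polytope, so by the maximum principle for convex functions over polytopes \cite{bertsekas2003convex} the maximum is attained at an extreme point, i.e., a vertex whose components are $\pm\delta^x_0$, $\pm\eta_w$ or $\pm\eta_v$ as stated. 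Taking $t^\star_k$ to be the maximizing vertex gives $\delta^{q,inf}_{r,k} = \|\mathbb{A}^q_k t^\star_k\|_2 = \max_{t_k \in \mathcal{X}^q_k} \|\mathbb{A}^q_k t_k\|_2 \geq \delta^q_{r,k}$. Combining the two bounds yields $\hat{\delta}^q_{r,k} = \min\{\delta^{q,inf}_{r,k}, \delta^{q,tri}_{r,k}\} \geq \delta^q_{r,k}$, as claimed.

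The routine but error-prone step is the column-block bookkeeping for $\delta^{q,tri}_{r,k}$: one must verify that the listed blocks of $\mathbb{A}^q_k$ pair correctly with $w$- versus $v$-type constraints and that the $w_0$, $v_{k-1}$, $v_k$ and summation terms are accounted for without double-counting. The only genuinely conceptual point — and the part I would state carefully — is that $t^\star_k$ must be the \emph{maximizing} vertex (equivalently, $\delta^{q,inf}_{r,k}$ is the maximum over all $2^{(n+l)(k+1)}$ vertices), since the convex maximum principle guarantees a vertex optimum but not that an arbitrary vertex suffices.
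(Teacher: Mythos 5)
Your proposal is correct and takes essentially the same route as the paper's proof: you enlarge the $2$-norm-constrained feasible set of \eqref{eq:genericupperbopund} to the componentwise box $\mathcal{X}^q_k$ (the paper phrases this as $\|\cdot\|_\infty \leq \|\cdot\|_2$ making one constraint set a subset of the other) and invoke the extreme-point principle for maximizing a convex function over a compact polytope to get $\delta^{q}_{r,k} \leq \delta^{q,inf}_{r,k}$, while obtaining $\delta^{q}_{r,k} \leq \delta^{q,tri}_{r,k}$ via triangle and sub-multiplicative inequalities on the block decomposition of $\mathbb{A}^q_k t_k$, exactly as the paper does. Your explicit caveat that $t^{\star}_k$ must be the \emph{maximizing} vertex (i.e., $\delta^{q,inf}_{r,k}$ is the maximum over all $2^{(n+l)(k+1)}$ vertices) is a useful clarification of a point the paper only spells out in the discussion following the theorem, not a deviation.
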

  \vspace{-0.3cm}\begin{proof}%[Proof of Theorem \ref{thm:resid_comp_up_bound}] 
 Consider the optimization problem
  \begin{align}
  &\delta^{q,inf}_{r,k} \triangleq \max \limits_{t_k} \| \mathbb{A}^q_k {t}_k \|_2 \label{eq:inf_norm}
%,t_k \triangleq [\tilde{x}_{0|0} \ w_0 \ ... \ w_{k-1} \ v_0 \ ... \ v_k] \top
\\ \nonumber &s.t. \ t_k= \begin{bmatrix} \tilde{x}^{\top}_{0|0} & w^{\top}_0 & \dots & w^{\top}_{k-1} & v^{\top}_0 & \dots & v^{\top}_k \end{bmatrix},
 \\ \nonumber  &\ \ \ \ \  \| \tilde{x}_{0|0} \|_{\infty} \leq \delta^x_0, \ \|w_i\|_{\infty} \leq \eta_w, \ \| v_j \|_{\infty} \leq \eta_v,
 \\ \nonumber &\ \ \ \ \ \forall i \in \{ 0,...,k-1 \}, \ \forall j \in \{ 0,...,k \}. %\ and%\mathbb{A}^q_k \ is \ defined \ in \ Lemma \ref{lem:resdef}.
 \end{align}
 Comparing \eqref{eq:genericupperbopund} and \eqref{eq:inf_norm}, the two problems have the same objective functions, while since $\| . \|_{\infty} \leq \| . \|_{2} $, the constraint set for \eqref{eq:genericupperbopund} is a subset of the one for \eqref{eq:inf_norm}. Hence $\delta^q_{r,k} \leq \delta^{q,inf}_{r,k}$. % Since $\delta^q_{r,k}$ is an upper bound for residual's norm, so is $\delta^{q,inf}_{r,k}$.
 Also, it is easy to see that $\hat{\delta}^q_{r,k} \leq \delta^{q,tri}_{r,k}$, using triangle and sub-multiplicative inequalities.
  Moreover, \eqref{eq:inf_norm} is a \emph{maximization} of a convex objective function over a convex constraint (hypercube $\mathcal{X}^q_k$). By a famous result \cite[Corollary 32.2.1]{rockafellar2015convex}, in such a problem, the objective function attains its maxima on some of the extreme points of the constraint set, which in this case are the vertices of the hypercube $\mathcal{X}^q_k$. %This, completes the proof.
 \end{proof}
 
It can be easily seen as a corollary of Theorem \ref{thm:resid_comp_up_bound} that:
 \begin{cor} \label{cor:verticenorm}
$ \eta^t_{k} \triangleq  \| {t}^{\star}_k \|_2=\sqrt{n {\delta^x_o}^2+kn \eta^2_w+(k+1)l\eta^2_v} $.
 \end{cor}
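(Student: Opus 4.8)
The plan is to read off the $2$-norm directly from the vertex description of $t^\star_k$ supplied by Theorem \ref{thm:resid_comp_up_bound}, since the corollary is a pure counting exercise. The key observation is that $t^\star_k$ is a vertex of the hypercube $\mathcal{X}^q_k$, so every coordinate $t^\star_k(i)$ sits at one of its two extreme values; consequently $|t^\star_k(i)|$ is completely determined by which of the three index blocks contains $i$, and the choice of sign is irrelevant because only the squared magnitudes enter the Euclidean norm.

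First I would partition the index set $\{1,\dots,(n+l)(k+1)\}$ into the three blocks dictated by the definition of $\mathcal{X}^q_k$: the initial-error block $1 \le i \le n$ on which $|t^\star_k(i)| = \delta^x_0$, the process-noise block $n+1 \le i \le n(k+1)$ on which $|t^\star_k(i)| = \eta_w$, and the measurement-noise block $n(k+1)+1 \le i \le (n+l)(k+1)$ on which $|t^\star_k(i)| = \eta_v$. Counting cardinalities gives $n$, $\ n(k+1)-n = nk$, and $(n+l)(k+1)-n(k+1) = l(k+1)$ coordinates in the three blocks respectively, which together exhaust the full ambient dimension $(n+l)(k+1)$.

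Next, using $\|t^\star_k\|_2^2 = \sum_i |t^\star_k(i)|^2$ and the fact that the magnitude is constant on each block, the sum collapses to $n (\delta^x_0)^2 + nk\,\eta_w^2 + l(k+1)\,\eta_v^2$; taking the square root yields the claimed expression $\eta^t_k = \sqrt{n{\delta^x_0}^2 + kn\eta_w^2 + (k+1)l\eta_v^2}$. There is no genuine obstacle here, as the statement follows essentially by inspection; the only thing to be careful about is the block bookkeeping, in particular recognizing that the $w$-block spans indices $n+1$ through $n(k+1)$ and therefore contributes $nk$ (not $n(k+1)$) terms, so that the three block counts sum correctly to the dimension of $t_k$.
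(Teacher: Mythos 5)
Your proof is correct and matches the paper's reasoning exactly: the paper states this corollary without an explicit proof, treating it as an immediate consequence of the vertex structure of $\mathcal{X}^q_k$ in Theorem \ref{thm:resid_comp_up_bound}, and your argument simply makes the intended computation explicit. Your block bookkeeping is right, including the one point worth checking --- that the $w$-block spans indices $n+1$ through $n(k+1)$ and thus contributes $nk$ coordinates of magnitude $\eta_w$, so the counts $n + nk + l(k+1)$ exhaust the ambient dimension $(n+l)(k+1)$.
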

% \begin{rem}
 Theorem \ref{thm:resid_comp_up_bound} enables us to obtain  an upper bound for $\|r^{q|*}_k\|_2$, by enumerating the objective function in \eqref{eq:inf_norm} at vertices of the hypercube $\mathcal{X}^q_k$ and choosing the largest value as $\delta^{q,inf}_{r,k}$. Moreover, we can easily calculate $\delta^{q,tri}_{r,k}$; then, the upper bound is chosen as %using triangle and sub-multiplicative inequalities and finally choosing 
 the minimum of the two as $\hat{\delta}^q_{r,k}$.
% \end{rem}
  %We should mention that
   \begin{rem}
   Although simulation results indicate that especially in earlier time steps, $\delta^{q,inf}_{r,k}$ may have smaller values than $\delta^{q,tri}_{r,k}$, but if we only consider $\delta^{q,inf}_{r,k}$ as the over-approximation and do not use $\delta^{q,tri}_{r,k}$, then we will face two difficulties.  
 %\end{rem}
% \begin{rem}
 %Although by Theorem \ref{thm:infnormbound} we are conveniently equipped with a tool to obtain an upper bound for the residual's norm and mode elimination, 
% but 
 First, as time increases, the number of required enumerations (i.e., the number of hypercube's vertices which is $2^{\moh{(n+l)(k+1)}}$) increases with an exponential rate. Second and more importantly, as Lemma \ref{lem:delta_inf_diverge} will indicate later, $\delta^{q,inf}_{r,k}$ goes to infinity as time increases, so it will be unlikely to eliminate any mode when the time step is large, i.e., asymptotically speaking, $\delta^{q,inf}_{r,k}$ will be useless. In contrast, again by Lemma \ref{lem:delta_inf_diverge}, $\delta^{q,tri}_{r,k}$ converges to some steady-state value, so it can be always used as an over-approximation for $\delta^{q}_{r,k}$ in the mode elimination process. 
   %by  $\delta^{q,inf}_{r,k}$ will become useless for asymptotic ex-post mode elimination.  
 \end{rem}
%\vspace{-0.2cm}
\section{Mode Detectability}
%\vspace{-0.2cm}
 In addition to the nice properties regarding the stability and boundedness of the mode-matched set estimates of state and input obtained from \cite{yong2018simultaneous}, we now provide some sufficient conditions for the system dynamics, which guarantee that regardless of the observations, after some large enough time steps, %most 1 mode will be compatible,
 \emph{all} the false (i.e., not true) modes can be eliminated, when applying Algorithm \ref{algorithm1}. %\color{red}applying our mode elimination approach\color{black}. 
 To do so, first, we define the concept of mode detectability %and assert two required lemmas 
 as well as some assumptions for deriving our sufficient conditions for mode detectability. % Since if this is the case, we make our conclusion \emph{before} realizing any observation, we call it \emph{ex-ante} elimination of the inconsistent modes.
%First we formally define this statement.
%\begin{rem} \label{rem:exantelimitaion}
%Obviously the aforementioned structural conditions are also sufficient to claim that all the modes except at most one of them can be ex-post $\delta^{q,tri}_{r,k}$-eliminated after some time step onward and so they can be ex-post $\delta^{q,tri}_{r,\infty }$-asymptotically-eliminated (c.f., Definition \ref{defn:compatibility}).  
%\end{rem} 

\begin{defn}[Mode Detectability] \label{defn:strong_mode_detecatble}
System \eqref{eq:sys_desc} 
%with sparse attacks and the set of possible modes $Q$.
 %Let $\delta^q_{r,k}$ be a generic upper bound for the residual's norm for mode $q$ and time step $k$. The system 
is called mode detectable if %at each time step $k>0$, at most one mode is $\delta^{q}_{r,k }$-compatible (c.f., Definition \ref{defn:compatibility}). It is called asymptotically strongly $Q-\delta^q_{r,k}$-mode-detectable
 % after some large enough tiime time 
 there exists a natural number $K>0$, such that for all time steps $k \geq K$, all false modes are eliminated.
 % at most one mode is $\delta^{q}_{r,k }$-consistent.
\end{defn}

%
%\balance
%\section{Ex-ante Mode Elimination}\label{sec:exante}
%So far we have provided an upper bound that enable us to eliminate inconsistent modes after realizing all the observations up to the current time step.
\begin{assumption} \label{assumption:boundedness}
There exist known $R_y, R_x \in \mathbb{R}$ such that $\forall k, y_{k} \in Y \triangleq \{ y \in \mathbb{R}^{l} \vline \ \| y \|_2 \leq R_y \}$ and $x_{k} \in X \triangleq \{ x \in \mathbb{R}^{n} \vline \ \| x \|_2 \leq R_x \}$, i.e., there exist known bounds for the whole observation/measurement and state spaces, respectively. %\footnote{JUSTIFICATION...}.
\end{assumption}
%\begin{assumption} \label{assumption:boundedness2}
%There exists known $R_x \in \mathbb{R}$ such that $\forall k,x_{k} \in X \triangleq \{ x \in \mathbb{R}^{n} \vline \ \| x \|_2 \leq R_x \}$, i.e., there exists a known bound for the whole state space.% \footnote{JUSTIFICATION...}.
%\end{assumption}

%\vspace{-0.3cm}
%\begin{rem}
%When $q$ cannot be eliminated using Algorithm \ref{algorithm1}, then $C^q_2 \hat{x}^{\star,q}_k + D^q_2 u^q_k$ should have the same order of magnitude as $z^q_{2,k}$ since although the residual is intuitively rather small if the noise terms are not adversarial, but we have been affected by adversarial disturbances indirectly in the process of selecting the (\emph{possibly wrong}) mode $q$. If this is the case, then $ \|C^q_2 \hat{x}^{\star,q}_{k|k} + D^q_2 u^q_k - C^{q'}_2 \hat{x}^{\star,q'}_{k|k} - D^{q'}_2 u^{q'}_k\|_2$ is unlikely to be greater than $R'^{q,q'}_y$ and it is even unlikelier for it to be greater than ${\delta}^{q,tri}_{r,k}+{\delta}^{q',tri}_{r,k}+R'^{q,q'}_y$, so sufficient condition \eqref{eq:disjointness} is more likely to not hold. Now, consider the case that $q$ is "not a true mode for sure and can be detected that is not true", i.e., can be eliminated using Algorithm \ref{algorithm1}. In this case, it is not granted that $C^q_2 \hat{x}^{\star,q}_{k|k} + D^q_2 u^q_k$ has the same order of magnitude as $z^q_{2,k}$, and it is exactly why our \emph{prior} partial information on the possible set of observations, helps satisfying the sufficient condition. 
%\end{rem}
%\begin{rem}

\begin{assumption}\label{as:2}
The unknown input/attack signal has an \emph{unlimited energy}, i.e., $\lim_{k\to\infty} \|d^{q*}_{0:k}\|_2 = \infty$, where $d^{q*}_{0:k} \triangleq \begin{bmatrix} d^{q*\top}_k & d^{q*\top}_{k-1} & \dots d^{q*\top}_0  \end{bmatrix}^{\top}$.
\end{assumption}
Note that Assumption \ref{as:2} is not restrictive because otherwise, the unknown input/attack  signal must vanish asymptotically, which means that the true mode (with no unknown inputs) can be inferred asymptotically. 

In order to derive the desired sufficient conditions for mode detectability in Theorem \ref{thm:strong_mode_detect}, we first present the following  Lemmas \ref{lem:delta_inf_diverge}--\ref{lem:resdef2}. For the sake of clarity, the proofs of these results are given in the Appendix.
\begin{lem} \label{lem:delta_inf_diverge}
 For each mode $q$,
 \begin{align}
    &\lim_{k\to\infty} \delta^{q,inf}_{r,k}= \infty. \label{eq:r_inf_diverge}
   \\  &\lim_{k\to\infty} \hat{\delta}^{q}_{r,k}=\lim_{k\to\infty} \delta^{q,tri}_{r,k} \leq \lim_{k\to\infty} \overline{\delta}^{q,tri}_{r,k} = \overline{\delta}^{q,tri}_{r}< \infty ,  \label{eq:r_tri_converge}
     \end{align}
     where
 % \begin{small}
% \begin{align}  %\label{eq:upper_tri}
% \begin{array}{rl}
 $\overline{\delta}^{q,tri}_{r,k} \triangleq \delta^{x,q}_0 \| C^q_2 \overline{A}^q {A^q_e}^{k-1} \|_2  +\eta_w \| C^q_2 \overline{A}^q {A^q_e}^{k-2} \|_2%+
+\eta_w[\| C^q_2 \overline{A}^q {A^q_e } \|_2 \|B^q_{e,w}\|_2 \sum_{i=0}^{k-3} (\| {A^q_e} \|^i_2) + \| C^q_2 B^{\star,q}_{e,w} \|_2]%+
+\eta_v [ \| C^q_2 \overline{A}^q {A^q_e}\|_2 \| B^q_{e,v_1}+ {A^q_e}B^q_{e,v_2} \|_2 \sum_{i=0}^{k-3} \| {A^q_e} \|^i_2]%+
+\eta_v[\|C^q_2 B^{q, \star}_{e,v_2}+T^q_2 \|_2+ \|C^q_2 (B^{q, \star}_{e,v_1}+ \overline{A}^q B^q_{e,v_2}) \|_2]%+
+\eta_v\|C^q_2 \overline{A}^q {A^q_e}^{k-2}B^q_{e,v_1} \|_2$, \moh{$\overline{\delta}^{q,tri}_{r} \triangleq \eta_w [\| C^q_2 B^{q,\star}_{e,w} \|_2 +\| C^q_2 \overline{A}^q A^q_e\|_2 /(1-\theta^q)+\| B^q_{e,w}\|_2]+ \eta_v [\| B^q_{e,v_1}+A^q_e B^q_{e,v_2} \|_2+\|C^q_2 B^{q, \star}_{e,v_2}+T^q_2 \|_2+ \|C^q_2 (B^{q, \star}_{e,v_1}+ \overline{A}^q B^q_{e,v_2}) \|_2 ]$ and $\theta^q \triangleq \|A^q_e \|_2$, with $\overline{A}^q$, $A^q_e$, $B^{q}_{e,w}$, $B^{q, \star}_{e,w}$, $B^{q}_{e,v_1}$, $B^{q, \star}_{e,v_1}$, $B^{q}_{e,v_2}$ and $B^{q, \star}_{e,v_2}$ given in Lemma \ref{lem:resdef}.} 
% \end{array}%\\ \| C^q_2 \overline{A}^q {A^q_e}^{i}(B^q_{e,v_1}+ {A^q_e}B^q_{e,v_2}) \|_2
% \\ \nonumber \overline{\delta}^{q,tri}_{r} &\triangleq \eta_w [\| C^q_2 B^{q,\star}_{e,w} \|_2 +\| C^q_2 \overline{A}^q A^q_e\|_2 (1/(1-\theta^q))+ \beta^q]
%\\ \nonumber &+ \eta_v [\|C^q_2 B^{q, \star}_{e,v_2}+T^q_2 \|_2+ \|C^q_2 (B^{q, \star}_{e,v_1}+ \overline{A}^q B^q_{e,v_2}) \|_2 ],
%\\ \nonumber \beta^q &\triangleq ( \eta_w \| B^q_{e,w}\|_2+ \eta_v \| B^q_{e,v_1}+A^q_e B^q_{e,v_2} \|_2), \theta^q \triangleq \|A^q_e \|_2.
% \end{align}
% \end{small}
 \end{lem}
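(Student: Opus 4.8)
The plan is to treat the two limits separately: the divergence \eqref{eq:r_inf_diverge} calls for a \emph{lower} bound on $\delta^{q,inf}_{r,k}$ that grows without bound, whereas \eqref{eq:r_tri_converge} calls for an \emph{upper} bound on $\delta^{q,tri}_{r,k}$, namely $\overline{\delta}^{q,tri}_{r,k}$, that converges to a finite limit. The two estimates are then stitched together through the identity $\hat{\delta}^q_{r,k}=\min\{\delta^{q,inf}_{r,k},\delta^{q,tri}_{r,k}\}$ from Theorem \ref{thm:resid_comp_up_bound}.

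For \eqref{eq:r_inf_diverge} I would work from the optimization characterization \eqref{eq:inf_norm} of $\delta^{q,inf}_{r,k}$, which maximizes $\|\mathbb{A}^q_k t_k\|_2$ over the hypercube $\mathcal{X}^q_k$. Since this is a maximum, any feasible sequence yields a lower bound, so it suffices to exhibit one along which the objective diverges. The natural candidate is the maximizing vertex $t^\star_k$, whose Euclidean length $\eta^t_k=\sqrt{n(\delta^x_0)^2+kn\eta_w^2+(k+1)l\eta_v^2}$ (Corollary \ref{cor:verticenorm}) grows like $\sqrt{k}$, because the ambient dimension $(n+l)(k+1)$ of the box grows while each coordinate keeps a fixed nonzero half-width. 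I would then argue that the $\Theta(k)$-many columns of $\mathbb{A}^q_k$ acting on the stacked noise $w_0,\dots,w_{k-1},v_0,\dots,v_k$ can be sign-aligned so that their contributions add rather than cancel along a fixed output direction; as the number of such columns grows linearly in $k$, the aligned sum is driven to infinity. Concretely, $\|\mathbb{A}^q_k t_k\|_2$ dominates the weighted $\ell_1$-sum of any single row of $\mathbb{A}^q_k$, and I would lower-bound one such row-sum and show it is unbounded in $k$.

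For \eqref{eq:r_tri_converge} I would proceed in two stages. First I would establish $\delta^{q,tri}_{r,k}\le\overline{\delta}^{q,tri}_{r,k}$ term by term via the sub-multiplicative inequality, replacing each summand such as $\|C^q_2\overline{A}^q (A^q_e)^{i}B^q_{e,w}\|_2$ by $\|C^q_2\overline{A}^q A^q_e\|_2\,\|B^q_{e,w}\|_2\,\|A^q_e\|_2^{\,i-1}$, so that the sum over $i$ becomes a geometric series in $\theta^q\triangleq\|A^q_e\|_2$, and likewise for the $v$-terms. Second I would pass to the limit in $\overline{\delta}^{q,tri}_{r,k}$: the transient terms carrying $(A^q_e)^{k-1}$ and $(A^q_e)^{k-2}$ vanish since $\theta^q<1$ forces $\|A^q_e\|_2^{\,k}\to 0$, while $\sum_{i=0}^{k-3}(\theta^q)^i\to 1/(1-\theta^q)$; collecting the surviving $k$-independent boundary terms yields exactly $\overline{\delta}^{q,tri}_{r}<\infty$. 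This is precisely where the condition $\theta^q<1$ is needed, and it is what makes $\delta^{q,tri}_{r,k}$ a convergent, asymptotically usable threshold.

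Finally I would combine the two parts: since $\delta^{q,inf}_{r,k}\to\infty$ while $\delta^{q,tri}_{r,k}\le\overline{\delta}^{q,tri}_{r,k}$ stays bounded, for all sufficiently large $k$ the minimum defining $\hat{\delta}^q_{r,k}$ is attained by $\delta^{q,tri}_{r,k}$, so $\lim_{k\to\infty}\hat{\delta}^q_{r,k}=\lim_{k\to\infty}\delta^{q,tri}_{r,k}$, and the remaining inequality in \eqref{eq:r_tri_converge} follows from the sandwich $\delta^{q,tri}_{r,k}\le\overline{\delta}^{q,tri}_{r,k}\to\overline{\delta}^{q,tri}_{r}$. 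I expect the main obstacle to be the divergence claim rather than the convergence: the convergence reduces to a geometric-series computation once sub-multiplicativity is invoked, whereas the divergence requires turning the growth of the feasible radius $\eta^t_k$ into genuine growth of the image $\|\mathbb{A}^q_k t_k\|_2$ — i.e.\ ruling out that the wide matrix $\mathbb{A}^q_k$ contracts the enlarging box back to a bounded set, for which the naive bound $\sigma_{\min}(\mathbb{A}^q_k)\,\eta^t_k$ is too weak and a direct, coordinate-wise lower bound is instead required.
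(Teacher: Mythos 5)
Your handling of \eqref{eq:r_tri_converge} is sound and is essentially the paper's own argument: term-by-term sub-multiplicativity gives $\delta^{q,tri}_{r,k}\le\overline{\delta}^{q,tri}_{r,k}$, the transients carrying ${A^q_e}^{k-1}$ and ${A^q_e}^{k-2}$ vanish, the geometric series in $\theta^q<1$ yields $\overline{\delta}^{q,tri}_{r}<\infty$, and the minimum in Theorem \ref{thm:resid_comp_up_bound} settles on $\delta^{q,tri}_{r,k}$ once $\delta^{q,inf}_{r,k}$ is large. The genuine gap is in your proof of \eqref{eq:r_inf_diverge}. The sign-aligned vertex does give $\delta^{q,inf}_{r,k}\ge\sum_j h_j|a_j|$ for any fixed row $a^\top$ of $\mathbb{A}^q_k$ (with $h_j$ the coordinate half-widths), but this weighted $\ell_1$ row sum is \emph{bounded} in $k$, not unbounded: inspecting Lemma \ref{lem:resdef}, every block of $\mathbb{A}^q_k$ except the three constant ones carries a power ${A^q_e}^{j}$ whose exponent grows with the age of the corresponding noise sample, so the entries decay geometrically at rate $\theta^q<1$ --- the very stability you invoke in the convergence half. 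The number of columns grows like $k$, but their weights are summable, so your aligned sum converges to a finite geometric-series limit and the ``coordinate-wise lower bound'' cannot diverge. Worse, the same block-wise estimate used as an upper bound, $\|\mathbb{A}^q_k t\|_2\le\sum_{j}\|\mathbb{A}^q_k(j)\|_2\sqrt{\dim_j}\,h_j$ (where $\mathbb{A}^q_k(j)$ denotes the $j$th block), is uniformly bounded in $k$ when $\theta^q<1$, so \emph{no} argument that uses only feasibility in the hypercube plus norm bounds on the blocks can establish \eqref{eq:r_inf_diverge}.

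The paper proves \eqref{eq:r_inf_diverge} by exactly the bound you dismissed as too weak: it writes $\delta^{q,inf}_{r,k}=\|\mathbb{A}^q_k t^{\star}_k\|_2\ge\eta^t_k\,\sigma_{\min}(\mathbb{A}^q_k)$ with $\eta^t_k$ from Corollary \ref{cor:verticenorm}, where $\sigma_{\min}$ is the \emph{non-trivial} (least nonzero) singular value justified via a matrix lower bound, and then argues that by stability $\mathbb{A}^q_k$ converges to a matrix with only three nonzero constant blocks, so $\sigma_{\min}(\mathbb{A}^q_k)$ remains bounded and nonzero while $\eta^t_k=\Theta(\sqrt{k})\to\infty$. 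Your stated reason for distrusting that route --- that the wide matrix $\mathbb{A}^q_k$ may contract the enlarging box through its kernel --- is in fact the delicate point there: for a fat matrix $\min_{\|t\|_2\le 1}\|\mathbb{A}^q_k t\|_2=0$, so the paper's chain implicitly requires the maximizing vertex $t^{\star}_k$ to retain a row-space component of order $\eta^t_k$, which is not verified. So your instinct flagged a real subtlety, but the substitute you proposed is strictly weaker than what the claim needs; any complete proof of \eqref{eq:r_inf_diverge} must engage the kernel/row-space structure through the nontrivial singular value (the paper's route), and you should note that the geometric decay of the blocks under $\theta^q<1$ places the divergence claim in genuine tension with every purely norm-based estimate, including yours.
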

 \vspace{-0.3cm}
\begin{lem} \label{lem:mode_disjointness}
 Suppose that Assumption \ref{assumption:boundedness} holds. % there exists a known $R_y \in \mathbb{R}$ such that $\forall k, y_{k} \in Y \triangleq \{ y \in \mathbb{R}^{l} \vline \ \| y \|_2 \leq R_y \}$, i.e., there exists a known bound for the whole observation/measurement space \footnote{JUSTIFICATION...}.
  Consider two different modes $q \neq q' \in Q$ and their corresponding upper bounds for their residuals' norms, $\delta^{q}_{r,k }$ and $\delta^{q'}_{r,k }$, at time step $k$. At least one of the two modes $q \neq q'$ will be eliminated if
  
\vspace{-0.4cm}\begin{small}
 \begin{align} \label{eq:disjointness}
 %\begin{small}
  &\| C^q_2 \hat{x}^{\star,q}_{k|k}- C^{q'}_2 \hat{x}^{\star,q'}_{k|k}\hspace{-0.1cm}+\hspace{-0.1cm}D^{q}_{2} u^{q}_k-D^{q'}_{2} u^{q'}_k\|_2 \hspace{-0.1cm}>\hspace{-0.1cm}\delta^{q}_{r,k}\hspace{-0.1cm}+\hspace{-0.1cm}\delta^{q'}_{r,k }\hspace{-0.1cm}+\hspace{-0.1cm}R^{q,q'}_z
%  \\ \nonumber  &where \ R^{q,q'}_z \triangleq R_y\| T^q_2-T^{q'}_2\|_2.  
  %\end{small}
  \end{align}
  \end{small}\vspace{-0.4cm}
  
\noindent  where $R^{q,q'}_z \triangleq R_y\| T^q_2-T^{q'}_2\|_2$. 
\end{lem}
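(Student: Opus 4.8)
The plan is to argue by contraposition: I will show that if \emph{neither} mode $q$ nor mode $q'$ survives being eliminated is false—that is, if both modes are retained at time $k$—then inequality \eqref{eq:disjointness} cannot hold. By the elimination rule of Theorem \ref{thm:online_mode_elimination}, a retained mode is one whose residual norm does not exceed its true-mode upper bound. Taking the valid bounds $\delta^{q}_{r,k}$ and $\delta^{q'}_{r,k}$ from Lemma \ref{lem:existance} (each of which is an admissible $\delta^{q,*}_{r,k}$ in the sense of Theorem \ref{thm:online_mode_elimination}), the assumption that both modes survive gives simultaneously $\| r^q_k \|_2 \leq \delta^{q}_{r,k}$ and $\| r^{q'}_k \|_2 \leq \delta^{q'}_{r,k}$.

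The key algebraic step is to subtract the two residuals. Using Definition \ref{defn:computedresidual} together with $z^{q}_{2,k} = T^{q}_2 y_k$, the quantity on the left-hand side of \eqref{eq:disjointness} can be rewritten exactly as
\begin{align}
\nonumber C^q_2 \hat{x}^{\star,q}_{k|k}- C^{q'}_2 \hat{x}^{\star,q'}_{k|k}+D^{q}_{2} u^{q}_k-D^{q'}_{2} u^{q'}_k = (T^q_2-T^{q'}_2)y_k - r^q_k + r^{q'}_k.
\end{align}
The point is that the observation-dependent terms $T^q_2 y_k$ and $T^{q'}_2 y_k$ do \emph{not} fully cancel (the two modes disagree on the output transformation), but their difference is precisely the residual-error term $(T^q_2-T^{q'}_2)y_k$ already identified in Proposition \ref{prop:residecomposition}.

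I would then take $2$-norms and apply the triangle inequality, the sub-multiplicative inequality, and Assumption \ref{assumption:boundedness} (which supplies $\|y_k\|_2 \leq R_y$ uniformly in $k$) to obtain
\begin{align}
\nonumber \| C^q_2 \hat{x}^{\star,q}_{k|k}- C^{q'}_2 \hat{x}^{\star,q'}_{k|k}+D^{q}_{2} u^{q}_k-D^{q'}_{2} u^{q'}_k\|_2 \leq \|T^q_2-T^{q'}_2\|_2 R_y + \delta^{q}_{r,k}+\delta^{q'}_{r,k} = R^{q,q'}_z + \delta^{q}_{r,k}+\delta^{q'}_{r,k},
\end{align}
which directly contradicts \eqref{eq:disjointness}. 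Hence the standing assumption fails: at least one of the two residual norms must strictly exceed its bound, so at least one mode is eliminated.

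The computations here are entirely routine; the only genuine content is the observation that $y_k$ enters both residuals and cancels up to the \emph{known} matrix difference $T^q_2-T^{q'}_2$, so that the sole uniform bound needed on the otherwise-unconstrained signal $y_k$ is exactly the one furnished by Assumption \ref{assumption:boundedness}. I therefore expect the main obstacle to be purely notational—keeping the signs and mode superscripts consistent when expanding the residual difference—rather than any conceptual difficulty.
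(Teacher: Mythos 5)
Your proposal is correct and takes essentially the same route as the paper's own proof: both argue by contradiction that if neither mode is eliminated, then rewriting the left-hand side of \eqref{eq:disjointness} via Definition \ref{defn:computedresidual} as $r^{q'}_k - r^q_k + (T^q_2 - T^{q'}_2)y_k$ and applying the triangle and sub-multiplicative inequalities together with Assumption \ref{assumption:boundedness} bounds it by $\delta^{q}_{r,k}+\delta^{q'}_{r,k}+R^{q,q'}_z$, contradicting the hypothesis. The only cosmetic difference is that you expand $z^q_{2,k}-z^{q'}_{2,k}$ explicitly as $(T^q_2-T^{q'}_2)y_k$ one step earlier than the paper does.
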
 
 \vspace{-0.3cm}
\begin{lem} \label{lem:resdef2}
Consider any mode $q$ with the unknown true mode being $q^{*}$. Then, at time step $k$, we have
%Consider mode $q$ at time step $k$, and the unknown true mode $q^{*}$. Then, %is the true mode, then the residual signal at time step $k$ can be obtained as
\begin{align*}
% r^{q|*}_k &= C^q_2 \tilde{x}^{\star,q}_{k|k}+v^q_{2,k}=\mathbb{A}^q_k {t}_k, \label{eq:resid_ideal}\\ 
r^{q}_k &= \begin{bmatrix} \mathbb{T}^{q,q^*}_k & \mathbb{B}^{q,q^*}_k & \mathbb{D}^{q,q^*}_k  \end{bmatrix} \begin{bmatrix} t^{\top}_k & u^{q^*\top}_{0:k} & d^{q*\top}_{0:k} \end{bmatrix}^{\top}, %\label{eq:resid_comp}
 \end{align*}
 where $u^{q^*}_{0:k} \triangleq \begin{bmatrix} u^{q*\top}_k & u^{q*\top}_{k-1} & \dots u^{q*\top}_0  \end{bmatrix}^{\top}$,
 \vspace{-0.15cm}
 \begin{small}
 \begin{align*}
% \nonumber &{t}_k \triangleq \begin{bmatrix} \tilde{x}^{\top}_{0|0} & w^{\top}_0 & \dots & w^{\top}_{k-1} & v^{\top}_0 & \dots & v^{\top}_k \end{bmatrix}^{\top} \in \mathbb{R}^{n+kn+(k+1)l},\\
  %\nonumber u^{q^*}_{0:k} &\triangleq \begin{bmatrix} u^{q*\top}_k & u^{q*\top}_{k-1} & \dots u^{q*\top}_0  \end{bmatrix}^{\top}\hspace*{-0.15cm},\\ % d^{q*}_{0:k} \triangleq \begin{bmatrix} d^{q*\top}_k & d^{q*\top}_{k-1} & \dots d^{q*\top}_0  \end{bmatrix}^{\top}\hspace{-0.15cm},\\
% \end{align}
%
% \begin{align}
  \nonumber \ \mathbb{T}^{q,q^*}_k &\triangleq (T^{q^*}_2-T^q_2)\begin{bmatrix} CA^k & CA^{k-1} & \dots & C & I \end{bmatrix} +\mathbb{A}^q_k,
\\ \nonumber \ \mathbb{B}^{q,q^*}_k &\triangleq (T^{q^*}_2-T^q_2)\begin{bmatrix} D & CB & CAB & \dots & CA^{k-1}B \end{bmatrix},% +\mathbb{A}^q_k,
 \\ \nonumber \ \mathbb{D}^{q,q^*}_k &\triangleq (T^{q^*}_2-T^q_2)\begin{bmatrix} H & CG & CAG \dots & CA^{k-1}G \end{bmatrix},% +\mathbb{A}^q_k,
% \\ \nonumber u^{q^*}_{0:k} &\triangleq \begin{bmatrix} u^{q*\top}_k & u^{q*\top}_{k-1} & \dots u^{q*\top}_0  \end{bmatrix}^{\top}, d_{0:k} \triangleq \begin{bmatrix} d^{\top}_k & d^{\top}_{k-1} & \dots d^{\top}_0  \end{bmatrix}^{\top},
  \end{align*}
   \end{small}\vspace{-0.4cm}
   % \\ \nonumber &t_k \triangleq [\tilde{x}_{0|0} \ w_0 \ ... \ w_{k-1} \ v_0 \ ... \ v_k] \top.
%$B^\star_{e,w}\triangleq I-G_2 M_2 C_2 $, $B^\star_{e,v1}\triangleq -(I-G_2 M_2 C_2) (G_1 M_1 T_1)$ %and $B^\star_{e,v2}\triangleq -G_2 M_2T_2$.

\noindent  with $t_k$ given in Lemma \ref{lem:resdef} and $d^{q*}_{0:k} $ in Assumption \ref{as:2}.
\end{lem}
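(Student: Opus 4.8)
The plan is to combine the residual decomposition already available from Proposition \ref{prop:residecomposition} and Lemma \ref{lem:resdef} with a closed-form unrolling of the true-mode output trajectory. By Proposition \ref{prop:residecomposition} we have $r^q_k = r^{q|*}_k + \Delta r^{q|q^*}_k$, and Lemma \ref{lem:resdef} already expresses the first term entirely through the noise/initial-error vector as $r^{q|*}_k = \mathbb{A}^q_k t_k$. Since $\Delta r^{q|q^*}_k = (T^q_2 - T^{q^*}_2)y_k$ by definition, the only remaining work is to expand the \emph{actual} measurement $y_k$ as a linear function of $t_k$, $u^{q^*}_{0:k}$ and $d^{q^*}_{0:k}$, substitute, and read off the three coefficient blocks.

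First I would unroll the true (mode $q^*$) dynamics. Iterating $x_{j+1} = A x_j + B u^{q^*}_j + G^{q^*} d^{q^*}_j + w_j$ from the initial state and substituting into $y_k = C x_k + D u^{q^*}_k + H^{q^*} d^{q^*}_k + v_k$ gives the purely algebraic identity
\begin{align}
\nonumber y_k &= C A^k x_0 + \textstyle\sum_{i=0}^{k-1} C A^{k-1-i}\big(B u^{q^*}_i + G^{q^*} d^{q^*}_i + w_i\big) \\
\nonumber &\quad + D u^{q^*}_k + H^{q^*} d^{q^*}_k + v_k ,
\end{align}
which separates cleanly into a part driven by the initial condition and the noises $\{w_i\},\{v_j\}$, a part driven by the known inputs, and a part driven by the unknown inputs/attacks.

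Next I would collect terms to match the stacking conventions of $t_k$, $u^{q^*}_{0:k}$ and $d^{q^*}_{0:k}$. The initial-condition and noise terms assemble, in the layout of $t_k$, into the row block $\begin{bmatrix} C A^k & C A^{k-1} & \dots & C & I \end{bmatrix}$ (the coefficient of $w_i$ being $C A^{k-1-i}$, and only $v_k$ entering with coefficient $I$, after identifying $x_0$ with $\tilde{x}_{0|0}$ under the standard zero initial-estimate convention). Reading the known-input terms against the reversed stacking $u^{q^*}_{0:k} = \begin{bmatrix} u^{q^*\top}_k & \dots & u^{q^*\top}_0 \end{bmatrix}^\top$ assembles them into $\begin{bmatrix} D & CB & CAB & \dots & CA^{k-1}B \end{bmatrix}$, and likewise the unknown-input terms assemble into $\begin{bmatrix} H^{q^*} & CG^{q^*} & \dots & CA^{k-1}G^{q^*} \end{bmatrix}$, which collapses to the stated block through $G^{q^*} = G\mathbb{I}^{q^*}_G$, $H^{q^*} = H\mathbb{I}^{q^*}_H$. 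Pre-multiplying each block by the mode-mismatch factor $T^{q^*}_2 - T^q_2$ inherited (with its sign) from $\Delta r^{q|q^*}_k$, and adding $\mathbb{A}^q_k$ to the $t_k$-coefficient coming from $r^{q|*}_k$, reproduces exactly $\mathbb{T}^{q,q^*}_k$, $\mathbb{B}^{q,q^*}_k$ and $\mathbb{D}^{q,q^*}_k$.

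I expect the main obstacle to be purely combinatorial bookkeeping rather than anything conceptual: correctly aligning the reversed time ordering of the stacked input vectors (index $k$ first, index $0$ last) with the ascending powers $A^0,\dots,A^{k-1}$ in the convolution sum, and placing zero coefficients on $v_0,\dots,v_{k-1}$ while $v_k$ carries the identity in the $t_k$-block. A minor subtlety is that the true initial state $x_0$ appears in the $y_k$-expansion whereas $t_k$ carries the estimation-error slot $\tilde{x}_{0|0}$; this is reconciled by the usual zero initial-estimate convention and is in any case immaterial for the intended use of the lemma, since the $t_k$- and $u^{q^*}_{0:k}$-blocks remain bounded and only the attack block $\mathbb{D}^{q,q^*}_k d^{q^*}_{0:k}$ drives the unbounded growth exploited in the subsequent mode-detectability analysis.
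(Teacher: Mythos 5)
Your proposal is correct and follows essentially the same route as the paper's own (very terse) proof: Proposition \ref{prop:residecomposition} together with $r^{q|*}_k=\mathbb{A}^q_k t_k$ from \eqref{eq:resid_ideal}, combined with the closed-form expression for $y_k$ obtained by unrolling \eqref{eq:sys_desc} via simple induction, then reading off the three coefficient blocks. You even supply bookkeeping the paper leaves implicit (the zero coefficients on $v_0,\dots,v_{k-1}$ with $I$ only on $v_k$, and the $x_0$-versus-$\tilde{x}_{0|0}$ convention); the one blemish you inherit — writing the mode-mismatch factor as $(T^{q^*}_2-T^q_2)$ while Proposition \ref{prop:residecomposition} defines $\Delta r^{q|q^*}_k=(T^q_2-T^{q^*}_2)y_k$ — is a sign inconsistency already present in the paper's own statements and is immaterial downstream, where only norms and singular values of these blocks are used.
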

%Now, we are ready to state some sufficient conditions for mode detectability.
\vspace{-0.1cm}
\begin{thm}[Sufficient Conditions for Mode Detectability] \label{thm:strong_mode_detect}
%Suppose the true mode attack signal has unlimited energy, i.e., $\lim_{k\to\infty} \|d^{q*}_{0:k}\|_2 = \infty$. Then, 
System \eqref{eq:sys_desc} is mode detectable, i.e., all false modes will be eliminated after some large enough time step $K$, using Algorithm \ref{algorithm1}, %applying \color{red}the mode elimination approach\color{black} in Section \ref{sec:MainResult}, 
if the assumptions in Theorem \ref{thm:filterbanks} and either of the following hold: 
\renewcommand{\theenumi}{\roman{enumi}}
\begin{enumerate}
\item Assumption \ref{assumption:boundedness} and $ \forall q,q' \in Q$, $q\neq q'$, \label{item:second}% \ s.t. \ 
\begin{align}
\nonumber  \sigma_{min} (W^{q,q'}) > \frac{\overline{\delta}^{q,tri}_{r}+\overline{\delta}^{q',tri}_{r}+R^{'q,q'}_y}{\sqrt{R^2_x+\eta^2_v}};\, % $\forall z \in \mathbb{C}, |z| 
\end{align}
\item \label{item:first} Assumption \ref{as:2} and \yong{$T^q_2 \neq T^{q'}_2$ holds $ \forall q,q' \in Q, q \neq q'$},  % \moha{$\implies$} 
 %\moha{.}%\implies T^q_2 \neq T^{q'}_2$. 
\end{enumerate} 
where $W^{q,q'}\hspace{-0.1cm} \triangleq \hspace{-0.1cm}\begin{bmatrix} (C^q_2 - C^{q'}_2) & (T^q_2 - T^{q'}_2) & -I  & I & D^q_2  & -D^{q'}_2 \end{bmatrix}$.
\end{thm}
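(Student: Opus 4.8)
The plan is to certify mode detectability by showing, under either hypothesis, that \emph{every} false mode is eventually and permanently deleted from $\hat{\mathbb{Q}}_k$, while the true mode $q^{*}$ survives. Survival of $q^{*}$ is immediate from Theorem \ref{thm:online_mode_elimination} (its residual never exceeds its own bound), and since Algorithm \ref{algorithm1} only ever removes indices from $\hat{\mathbb{Q}}_k$, it suffices to produce, for each false mode $q \neq q^{*}$, a finite time $K_q$ after which the test $\|r^q_k\|_2 > \hat{\delta}^q_{r,k}$ fires; the asserted $K$ is then $\max_{q \neq q^{*}} K_q$. In both cases the threshold side is handled identically: by Lemma \ref{lem:delta_inf_diverge}, $\hat{\delta}^q_{r,k}=\delta^{q,tri}_{r,k}\to \overline{\delta}^{q,tri}_r<\infty$, so the right-hand side of every elimination inequality is uniformly bounded, and the whole problem reduces to forcing the left-hand (residual, resp.\ separation) side above this finite ceiling.

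For condition \ref{item:first} (Assumption \ref{as:2} with $T^q_2 \neq T^{q^{*}}_2$) I would argue on the residual directly. Lemma \ref{lem:resdef2} decomposes $r^q_k$ into a noise/initial-condition part $\mathbb{T}^{q,q^{*}}_k t_k$, a known-input part $\mathbb{B}^{q,q^{*}}_k u^{q^{*}}_{0:k}$, and an attack-driven part $\mathbb{D}^{q,q^{*}}_k d^{q^{*}}_{0:k}$ with $\mathbb{D}^{q,q^{*}}_k=(T^{q^{*}}_2-T^q_2)[H\ CG\ \cdots\ CA^{k-1}G]$. Because $T^q_2\neq T^{q^{*}}_2$ this block is nonzero, and the idea is that as $\|d^{q^{*}}_{0:k}\|_2\to\infty$ (Assumption \ref{as:2}) the term $\mathbb{D}^{q,q^{*}}_k d^{q^{*}}_{0:k}$ drives $\|r^q_k\|_2$ past the finite ceiling $\overline{\delta}^{q,tri}_r$, forcing elimination. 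Equivalently, one runs a contradiction: were $q$ to survive for all $k$, then $\|r^q_k\|_2\le\hat{\delta}^q_{r,k}$ would remain bounded, which is incompatible with unbounded attack energy feeding through $(T^{q^{*}}_2-T^q_2)$, since the noise and known-input contributions are exactly the ones already shown bounded for the true mode's own residual.

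For condition \ref{item:second} (Assumption \ref{assumption:boundedness} with the $\sigma_{\min}(W^{q,q'})$ bound) I would instead invoke the pairwise test of Lemma \ref{lem:mode_disjointness}, applied to the pairs $(q^{*},q)$; since $q^{*}$ is unknown the hypothesis is asked for \emph{all} pairs $q\neq q'$. The first step is to rewrite the left-hand side of \eqref{eq:disjointness} as $W^{q,q'}$ acting on the stacked vector built from the true state $x_k$, the measurement noise $v_k$, the two mode-matched output-residual terms, and the two known inputs $u^q_k,u^{q'}_k$, which is exactly the column pattern $[(C^q_2-C^{q'}_2)\ (T^q_2-T^{q'}_2)\ {-I}\ I\ D^q_2\ {-D^{q'}_2}]$. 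Using $\|x_k\|_2\le R_x$, $\|v_k\|_2\le\eta_v$ from Assumption \ref{assumption:boundedness} to control the relevant portion of that vector by $\sqrt{R_x^2+\eta_v^2}$, the non-trivial least singular value $\sigma_{\min}(W^{q,q'})$ is meant to lower-bound the separation by $\sigma_{\min}(W^{q,q'})\sqrt{R_x^2+\eta_v^2}$, while Lemma \ref{lem:delta_inf_diverge} upper-bounds the threshold sum $\delta^q_{r,k}+\delta^{q'}_{r,k}+R^{q,q'}_z$ by $\overline{\delta}^{q,tri}_r+\overline{\delta}^{q',tri}_r+R^{'q,q'}_y$. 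The displayed inequality is precisely the statement that the former beats the latter, so \eqref{eq:disjointness} holds for all large $k$, Lemma \ref{lem:mode_disjointness} removes one mode of each pair, and pairing with $q^{*}$ (which cannot be the removed one) eliminates every false $q$.

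I expect the crux in both cases to be the same ``coercivity'' issue: showing that the distinguishing signal actually reaches the left-hand side rather than being annihilated. In condition \ref{item:second} this is the assertion that $\sigma_{\min}(W^{q,q'})$ genuinely lower-bounds the separation, i.e.\ that the stacked vector is kept away from the near-null-space of the wide matrix $W^{q,q'}$, which is where the strong-detectability hypotheses inherited from Theorem \ref{thm:filterbanks} and the careful accounting of the $\sqrt{R_x^2+\eta_v^2}$ scaling must enter; in condition \ref{item:first} it is the parallel claim that the unbounded energy of $d^{q^{*}}_{0:k}$ is not cancelled by the bounded noise and known-input terms before it surfaces in $r^q_k$. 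Establishing these lower bounds, rather than the routine upper bounds on the thresholds supplied by Lemma \ref{lem:delta_inf_diverge}, is the main obstacle.
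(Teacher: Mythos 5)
Your plan matches the paper's proof route in both branches: for condition \ref{item:second} the paper likewise applies Lemma \ref{lem:mode_disjointness} pairwise, substitutes $C^q_2\hat{x}^{\star,q}_{k|k}=C^q_2x_k+T^q_2v_k-r^{q|*}_k$ to rewrite the separation in \eqref{eq:disjointness} as $\|W^{q,q'}s^{q,q'}_k\|_2$ with $s^{q,q'}_k=\begin{bmatrix}x_k^\top & v_k^\top & r^{q|*\top}_k & r^{q'|*\top}_k & u^{q\top}_k & u^{q'\top}_k\end{bmatrix}^\top$, and for condition \ref{item:first} it starts from Lemma \ref{lem:resdef2} and Assumption \ref{as:2}, with Lemma \ref{lem:delta_inf_diverge} capping the thresholds in both cases, exactly as you say. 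The trouble is that the two steps you defer as ``the main obstacle'' are not residual technicalities; they are the proof, and the paper supplies a specific mechanism for both, namely the matrix lower bound of \cite{grcar2010matrix}. In branch \ref{item:second} the paper relaxes the four separate caps into the single ball constraint $\|s^{q,q'}_k\|_2^2\le R_x^2+\eta_v^2+(\delta^{q,tri}_{r,k})^2+(\delta^{q',tri}_{r,k})^2+(u^q_k)^2+(u^{q'}_k)^2$, derives the time-dependent sufficient condition \eqref{eq:timed_sufficient_compatibility}, and only then passes to the stated time-independent condition by shrinking the denominator to $R_x^2+\eta_v^2$, majorizing $\delta^{q,tri}_{r,k}$ by $\overline{\delta}^{q,tri}_{r,k}$, and letting $k\to\infty$ via \eqref{eq:r_tri_converge}; your one-line claim that $\sigma_{\min}(W^{q,q'})\sqrt{R_x^2+\eta_v^2}$ ``lower-bounds the separation'' skips this chain and, taken literally, is false, since the admissible $s^{q,q'}_k$ are only bounded \emph{above} and nothing keeps $\|s^{q,q'}_k\|_2$ near $\sqrt{R_x^2+\eta_v^2}$ — to be fair, the paper's own minimization over a constraint set containing the origin has the same soft spot, so you diagnosed a real weakness, but flagging it is not the same as either executing the paper's chain of sufficient conditions or repairing it.

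The gap in branch \ref{item:first} is more consequential. Your contradiction — bounded $\|r^q_k\|_2$ is ``incompatible with unbounded attack energy feeding through $(T^{q^*}_2-T^q_2)$'' — does not follow from $\mathbb{D}^{q,q^*}_k\neq 0$: the matrix $\mathbb{D}^{q,q^*}_k$ is wide with a null space that grows with $k$, so there exist attack sequences with $\|d^{q*}_{0:k}\|_2\to\infty$ whose image $\mathbb{D}^{q,q^*}_k d^{q*}_{0:k}$ remains bounded or is exactly cancelled; unbounded energy alone never surfaces in the residual without a quantitative non-cancellation argument. The paper's actual argument is: bound $\|\begin{bmatrix}\mathbb{T}^{q,q'}_k & \mathbb{B}^{q,q'}_k & \mathbb{D}^{q,q'}_k\end{bmatrix}t'_k\|_2$ from below by $\sigma_{\min}(\cdot)\,\|t'_k\|_2$ via the matrix lower bound, observe $\|t'_k\|_2\ge\|d^{q*}_{0:k}\|_2$, so elimination of $q$ is guaranteed once $\|d^{q*}_{0:k}\|_2$ exceeds the quotient in \eqref{eq:suff2}; then show that this quotient has a finite limiting value $\tilde{\delta}$ (the numerator converges by Lemma \ref{lem:delta_inf_diverge}, the least singular value either diverges or converges to some $\overline{\sigma}^{q,q'}$, and the hypothesis $T^q_2\neq T^{q'}_2$ for \emph{all} pairs — needed because $q^*$ is unknown — keeps the matrix nonzero), fix $\overline{d}>\tilde{\delta}$, and invoke the monotonicity of $k\mapsto\|d^{q*}_{0:k}\|_2$ together with Assumption \ref{as:2} to obtain a single time $K$ valid for every false mode. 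Without this $\sigma_{\min}$-based coercivity step and the limiting argument that makes the comparison uniform in $k$, your proposal establishes only the easy half (convergent thresholds and survival of $q^*$) and leaves the elimination of the false modes unproven.
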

\vspace{-0.2cm}
\section{Simulation Example} \label{sec:examples}
\vspace{-0.1cm}
We consider a system  that has been used as a benchmark for many state and input filters/observers (e.g.,\cite{yong2018switching}):%, which is similar to the failure detection problem first considered in \cite{Keller.1996}, given by:

\small \vspace{-0.2cm}
\begin{align*}
A &=\hspace{-0.05cm} \begin{bmatrix} 0.5 & 2 & 0 & 0 & 0\\ 0 & 0.2 & 1 & 0 &1 \\ 0 & 0 & 0.3 & 0 & 1 \\ 0 & 0 & 0 & 0.7 & 1 \\ 0 & 0 & 0 & 0 & 0.1\end{bmatrix}\hspace{-0.1cm}; 
 G =\hspace{-0.05cm} \begin{bmatrix} 1  \\ 0.1  \\ 0.1 \\1\\0 \end{bmatrix}\hspace{-0.1cm}; H=\hspace{-0.05cm}\begin{bmatrix} 1 & 0 & 0 &0 \\ 0 & 1 & 0 & 0 \\ 0 & 0 & 1 & 0 \\ 0 & 0 & 0 & 1 \\ 0&0&0&0 \end{bmatrix}\hspace{-0.1cm};\\ 
 B &= 0_{5 \times 1}; C = I_5;D = 0_{5 \times 1}.
\end{align*} %\vspace{-0.05cm}
\normalsize 

The unknown inputs used in this example are as given in Figure \ref{fig:estimates}, while the initial state estimate and noise signals have bounds $\delta_x=0.5$, $\eta_w=0.02$ and $\eta_v= 10^{-4}$. We assume possible attacks on the actuator and four of five sensors, i.e., $t_a=1$ and $t_s=4$. Moreover, we assume that there are $\rho=4$ attacks, so we should consider $Q={5 \choose 4}\yong{=5}$ modes. Table \ref{fig:mode_table} indicates different modes, their attack location(s) and the matrix $T^q_2$ for each mode $q$, where, as can be observed, the \yong{second} %first 
set of sufficient conditions in Theorem \ref{thm:strong_mode_detect} holds, i.e., \yong{$T^q_2 \neq T^{q'}_2$ for all $q \neq q'$,} %$q \neq q' \implies T^q_2 \neq T^{q'}_2$, 
so we expect that after some large enough time, all the false modes be eliminated, i.e., at most one (true) mode remains at each time step, which can be seen in Figure \ref{fig:resbounds}, where the number of eliminated modes at each time step is exhibited.
%.\begin{figure}[!h]
%\begin{center}
%\includegraphics[scale=0.450,trim=0mm 0mm 0mm 0mm,clip]{Figures/mode_table.pdf}%\vspace{-0.15cm}
%\caption{Different modes and their $T^q_2$ \label{fig:mode_table}} 
%\includegraphics[scale=0.35,trim=17mm 60mm 20mm 2mm,clip]{Figures/bounds_ULISO.eps}%\vspace{-0.15cm}
%\caption{Actual estimation errors and radii of set-valued estimates of states, \hspace{-0.1cm}$\|\tilde{x}_{k|k}\|$, \hspace{-0.075cm}$\delta^x_k$, and unknown inputs, \hspace{-0.1cm}$\|\tilde{d}_{k}\|$, \hspace{-0.075cm}$\delta^d_k$. \hspace{-0.45cm} \label{fig:variances}}
%\end{center}
%\vspace{-0.425cm}
%\end{figure}
\begin{table}[b] \vspace{-0.05cm}
	\centering \setlength{\tabcolsep}{3pt}
	\caption{Different modes and their $T^q_2$. \label{fig:mode_table}} %\vspace{0.1cm}
	\vspace{0.05cm} \footnotesize \fontsize{7.5}{7.5}\selectfont
	\begin{tabular}{| c | c | c |}
		\hline
		  Mode & Attack location(s) & $T_2^q$ \\ \hline
		$q=1$ & Actuator \& Sensors 1,2,3 & [0.2518  -0.1068  -0.2409  -0.5862  0.7236]$^\top$ \\ \hline
		$q=2$ & Actuator \& Sensors 1,2,4 & [0.0080  0.7604  -0.1522  -0.5862  -0.6313]$^\top$ \\ \hline
		$q=3$ & Actuator \& Sensors 1,3,4 & [-0.5357  0.7289  0.1984  -0.3774  0.0009]$^\top$ \\ \hline
		$q=4$ & Actuator \& Sensors 2,3,4 & [0.7092  -0.5570  -0.1797  -0.3295  0.2143]$^\top$ \\ \hline
		$q=5$ & Sensors 1,2,3,4 & [0.1679  -0.5682  0.5198  -0.4883  0.3747]$^\top$ \\ \hline
	\end{tabular} \normalsize \vspace{-0.1cm}
\end{table}
\begin{figure}[!b]
\begin{center}
%\includegraphics[scale=0.495,trim=0mm 0mm 0mm 0mm,clip]{Figures/mode_table2.pdf}%\vspace{-0.15cm}
%\caption{Different modes and their $T^q_2$ \label{fig:mode_table}}
\includegraphics[scale=0.205,trim=43mm 8mm 20mm 5mm,clip]{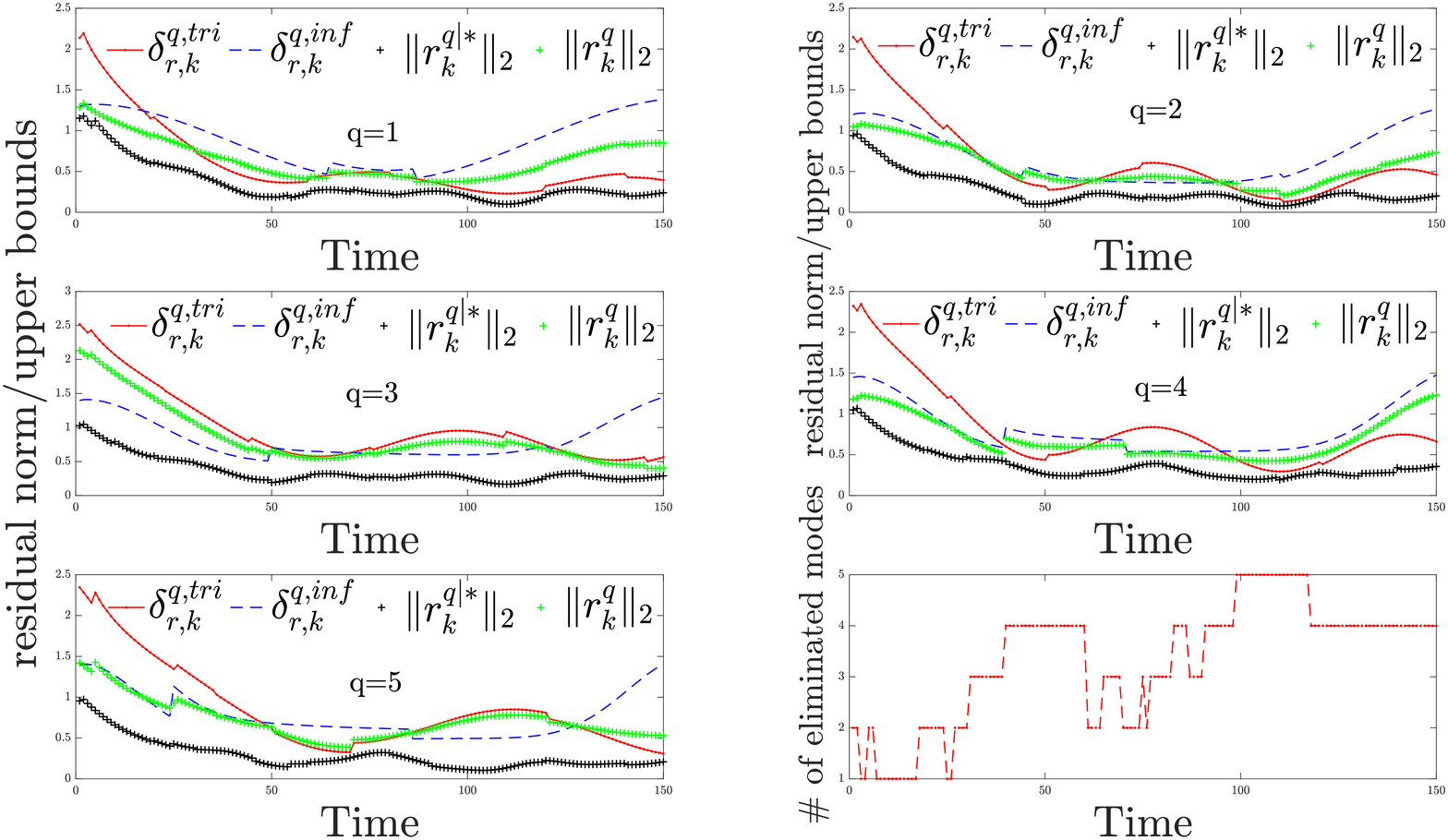}%\vspace{-0.15cm}
\caption{$\|r^{q}_{r,k}\|_2$,$\|r^{q|*}_{r,k}\|_2$ and \moh{their} upper bounds for different modes, as well as the number of eliminated modes in time \label{fig:resbounds} }
%\includegraphics[scale=0.200,trim=36mm 18mm 20mm 17mm,clip]{Figures/estimates.eps}%\vspace{-0.15cm}
%\caption{$x_k$ and $\hat{x}^q_{k|k}$ for $q=3$ \label{fig:estimates} }
%\vspace{.5cm}
\includegraphics[scale=0.205,trim=45mm 0mm 20mm 5mm,clip]{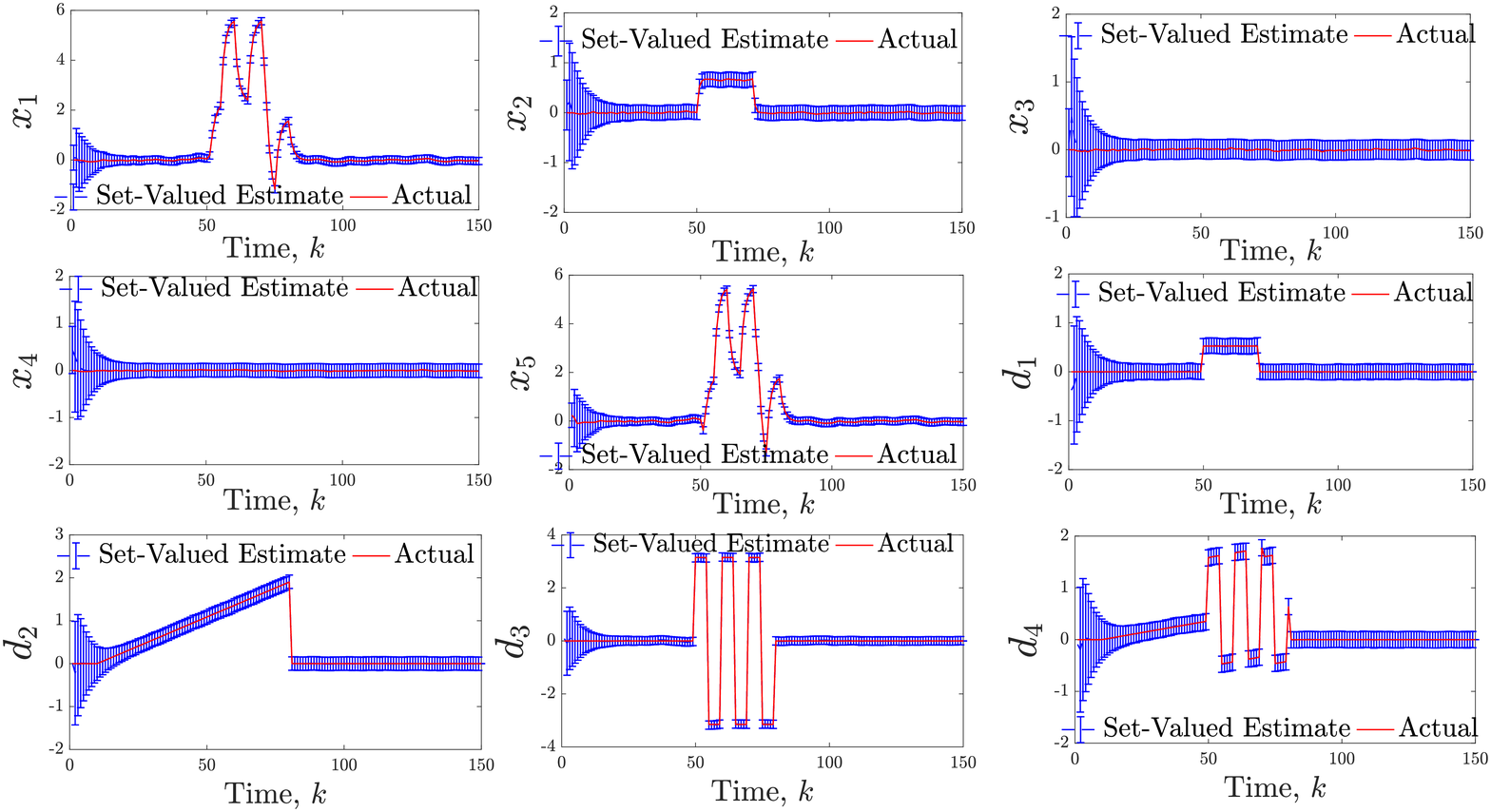}%\vspace{-0.15cm}
%\vspace{.5cm}
\caption{State and unknown input set-valued estimates. \label{fig:estimates} \vspace{-0.3cm} }
%\includegraphics[scale=0.35,trim=17mm 60mm 20mm 2mm,clip]{Figures/bounds_ULISO.eps}%\vspace{-0.15cm}
%\caption{Actual estimation errors and radii of set-valued estimates of states, \hspace{-0.1cm}$\|\tilde{x}_{k|k}\|$, \hspace{-0.075cm}$\delta^x_k$, and unknown inputs, \hspace{-0.1cm}$\|\tilde{d}_{k}\|$, \hspace{-0.075cm}$\delta^d_k$. \hspace{-0.45cm} \label{fig:variances}}
\end{center}
%\vspace{-0.425cm}
\end{figure}
Moreover, for each specific mode $q$, the signals $\|r^q_{k}\|_2,\|r^{q|*}_{k}\|_2,\delta^{q,tri}_{r,k}$ and $\delta^{q,inf}_{r,k}$ are depicted in  Figure \ref{fig:resbounds}. % and it's upper bounds, for a specific mode $q_0$, when $\rho=2$ and $\mathbb{I}_{q_0}=\begin{bmatrix} 1 & 0 & 0 \\ 0 & 0 & 0 \\ 0 & 0 & 1 \end{bmatrix}^{\top}$. 
 As can be seen, up to some large enough time, at different time intervals for different modes, one of the upper bounds may be tighter than the other, or vice-versa, so it is reasonable that we consider a minimum of them as the computed upper bound in our mode elimination algorithm. Furthermore, for all modes, $\delta^{q,tri}_{r,k}$ is eventually  convergent while $\delta^{q,inf}_{r,k}$ diverges, as we proved in Lemma \ref{lem:delta_inf_diverge}. So, after some large enough time, $\delta^{q,tri}_{r,k}$ can be used as our upper-bound, while $\delta^{q,inf}_{r,k}$ becomes useless. The %simulation results for the 
corresponding set-valued estimates are provided in Figure \ref{fig:estimates}.% given that the true mode is $q=3$, i.e., there are attacks on the actuator and sensors 1,3 and 4. 
  %< \delta^{q_0,tri}_{r,k}$, so it may be more reasonable to use $\delta^{q_0,inf}_{r,k}$ rather than $\delta^{q_0,tri}_{r,k}$, but as time increases, $\delta^{q_0,inf}_{r,k}$ gradually diverges while $\delta^{q_0,tri}_{r,k}$ converges to some steady state, so asymptotically speaking, $\delta^{q_0,inf}_{r,k}$ will become useless and $\delta^{q_0,tri}_{r,k}$ should be used.  

%\begin{figure}[!h]
%\begin{center}
%\includegraphics[scale=0.15,trim=6mm 0mm 0mm 0mm,clip]{Figures/mode_elimination_1.eps}%\vspace{-0.15cm}
%\caption{Number of eliminated modes for $\rho=1$ \label{fig:modeliminate1}}
%\includegraphics[scale=0.15,trim=6mm 0mm 0mm 0mm,clip]{Figures/mode_elimination_2.eps}%\vspace{-0.15cm}
%\caption{Number of eliminated modes for $\rho=2$ \label{fig:modeliminate2}}
%\end{center}
%\vspace{-0.425cm}
%\end{figure}
 %Figures \ref{fig:modeliminate1} and \ref{fig:modeliminate2}, exhibit the number of eliminated modes, using both upper bounds for $\rho=1$ and $\rho=2$ respectively.
\vspace{-0.1cm}

\section{Conclusion} \label{sec:conclusion}

We proposed a residual-based approach for %bounded-error 
hidden mode switched linear systems with \yong{bounded-norm noise and} %completely 
unknown attack signals\moh{. The proposed approach} at each time step, removes the inconsistent modes and their corresponding observers from a bank of estimators, which includes mode-matched observers\moh{. Each mode-matched observer}, conditioned on its corresponding mode being true, simultaneously finds bounded sets of states and unknown inputs that include the true state and inputs. Our mode elimination criterion required a bounded upper bound for the residual's norm, for which we proved its existence and computed it by over-approximating the value function of a non-concave NP-hard norm-maximization problem by expanding its constraint set and converting \moh{it} into a convex maximization over a convex set with finite number of extreme points. Such a problem can be solved by enumerating the objective function on the extreme points of the constraint set and comparing the corresponding values. Moreover, we proved the convergence of the upper bound signal %. Furthermore, applying matrix lower bound theorem and considering the convergence behavior of the provided  upper bound signal, we 
		and derived sufficient conditions for eventually eliminating all false modes \yong{using} our \moh{mode elimination algorithm}. 
		Finally, we demonstrated  the effectiveness of our \moh{observer} using an illustrative example.
\bibliographystyle{unsrturl}
\vspace{-0.1cm}
\bibliography{biblio}
\normalsize

\vspace{-0.4cm}\section*{Appendix: Proofs} \label{subsec:thmproof}

\vspace{-0.2cm}  \begin{proof}[Proof of Lemma \ref{lem:delta_inf_diverge}]
 To show \eqref{eq:r_inf_diverge}, we first find a lower bound for $\delta^{q,inf}_{r,k}$. Then, we show that the lower bound diverges and so does $\delta^{q,inf}_{r,k}$. Define $\tilde{t}^{\star}_k \triangleq {t^{\star}_k}/{\eta^t_k}$, where $\eta^t_k$ is defined in Corollary \ref{cor:verticenorm}. Now consider  %This Corollary also implies that $ \| \tilde{t}^{\star}_k \|_2=1$. Now consider
% \begin{scriptsize}
 \begin{align*}
 \eta^t_k \sigma_{min}(\mathbb{A}^q_k)& =  \sigma_{min}(\eta^t_k \mathbb{A}^q_k)= \min \limits_{\| t \|_2 \leq 1} \| \eta^t_k \mathbb{A}^q_k t \|_2 
\\ & \leq \| \eta^t_k \mathbb{A}^q_k \tilde{t}^{\star}_k\|_2 = \| \mathbb{A}^q_k t^{\star}_k \|_2 =\delta^{q,inf}_{r,k},
 \end{align*}
 %\end{scriptsize}
 where $\sigma_{min}(A)$ is the least non-trivial singular value of matrix $A$, the first equality holds since $\sigma_{min}(.)$ is a linear operator, the second equality is a special case of a \emph{matrix lower bound} \cite{grcar2010matrix} when 2-norms are considered, the inequality holds since $\|\tilde{t}^{\star}_k\|_2=1$ by Corollary \ref{cor:verticenorm}, so $\tilde{t}^{\star}_k$ is a feasible point for the minimization in the third statement and the last equality holds by Theorem \ref{thm:resid_comp_up_bound}. So far we have shown that $\eta^t_k \sigma_{min}(\mathbb{A}^q_k)$ is a lower bound for $\delta^{q,inf}_{r,k}$. \moh{Next,} we will \moh{prove} that $\eta^t_k \sigma_{min}(\mathbb{A}^q_k)$ is unbounded. First, it is trivial that $\eta^t_k$ is unbounded by its definition \moh{in} Corollary \ref{cor:verticenorm}. Second, consider the block matrix $\mathbb{A}^q_k$ in Lemma \ref{lem:resdef}. \moh{By} \yong{the} strong detectability assumption, matrix $A^q_e$ is stable \cite[Theorem 3 and Appendix C]{yong2018simultaneous}, so all the block matrices of  $\mathbb{A}^q_k$, except \moh{three} of them which are constant matrices with respect to time, converge to zero matrices when time goes to infinity. Hence $\mathbb{A}^q_k$ converges to an infinite dimensional sparse matrix, with only \moh{three} non-zero finite dimensional constant blocks and so the limit matrix has a finite rank and clearly has a bounded minimum non-trivial singular value. Henceforth, $\eta^t_k \sigma_{min}(\mathbb{A}^q_k)$ is unbounded, since the product of \moh{the} bounded and non-zero $\sigma_{min}(\mathbb{A}^q_k)$ and unbounded $\eta^t_k$ is unbounded. 
 As for \eqref{eq:r_tri_converge}, the first equality holds by definition of $\hat{\delta}^{q}_{r,k}$ (\yong{cf.} Theorem \ref{thm:resid_comp_up_bound}) and \eqref{eq:r_inf_diverge}, the first inequality holds \moh{since} $\delta^{q,tri}_{r,k}\leq \overline{\delta}^{q,r}_{r,k}$ by triangle and sub-multiplicative inequalities and the last equality, i.e., convergence of $\delta^{q,tri}_{r,k}$, follows from strong detectability assumption which implies the stability of ${A^q_e}$ \cite[Theorem 3]{yong2018simultaneous}.% and the use of sub-multiplicative and triangle inequalities. The equality also holds by boundedness of $\delta^{r,q}_{k,tri}$ and divergence of $\delta^{r,q}_{k,inf}$.
 \end{proof}
\vspace{-0.5cm} \begin{proof}[Proof of Lemma \ref{lem:mode_disjointness}] 
 Suppose, for contradiction, that none of $q$ and $q'$ are eliminated. Then 
 \begin{small}
 \begin{align}
 \nonumber &\| C^q_2 \hat{x}^{\star,q}_{k|k}+D^{q}_{2} u^{q}_k- C^{q'}_2 \hat{x}^{\star,q'}_{k|k}-D^{q'}_{2} u^{q'}_k\|_2=
 \\ \nonumber &\|r^{q'}_{k}-r^{q}_{k}+z^q_{2,k}-z^{q'}_{2,k})\|_2 \leq \|r^{q'}_{k}\|_2+\|r^{q}_{k}\|_2+\|z^q_{2,k}-z^{q'}_{2,k}\|_2 
 \\ \nonumber &\leq \delta^{q}_{r,k}+\delta^{q'}_{r,k } + R_y\| T^q_2-T^{q'}_2\|_2, 
 %\\ \nonumber &\leq \delta^{q}_{r,k}+\delta^{q'}_{r,k } + 2R_z,
 \end{align}
 \end{small}
 
\vspace{-0.4cm}\noindent where the equality holds by %adding and subtracting $z^q_{2,k}$ and $z^{q'}_{2,k}$ to the first term and 
Definition \ref{defn:computedresidual}, the first inequality holds by triangle inequality and the last inequality holds by the assumption that none of $q$ and $q'$ can be eliminated, as well as the boundedness assumption for the measurement space. This %completes the proof since the 
last inequality contradicts with the inequality in the lemma, thus the result holds. %\eqref{eq:disjointness}.    
\end{proof}
\vspace{-0.35cm}\begin{proof}[Proof of Lemma \ref{lem:resdef2}]
%Considering \eqref{eq:resid_ideal}, the first equality directly comes from Definition \ref{defn:computedresidual} and eqaution \eqref{eq:z2}, assuming that $q$ is the true mode, and the second equality is implied by the first equality and the fact that %(c.f., \cite[Appendix C]{yong2018simultaneous}) %&\begin{array}{l}
%\begin{align}
%\nonumber \tilde{x}^{\star,q}_{k|k}&=\overline{A}^q{A^q_e}^{k-1} \tilde{x}_{0|0}+\overline{A}^q{A^q_e}^{k-2}\begin{bmatrix} B^q_{e,w}  B^q_{e,v1} \end{bmatrix}\vec{w}_0 %\begin{bmatrix} w_0\\ v_0 \end{bmatrix}\\
% \\ \nonumber &+B_{e,w}^{\star,q} w_{k-1} + (B_{e,v1}^{\star,q}+\overline{A}^qB^q_{e,v2}) v_{k-1} + B_{e,v2}^{\star,q} v_k
%  \\ \nonumber &+\textstyle \sum_{i=1}^{k-2} \overline{A}^q{A^q_e}^{k-1-i} \begin{bmatrix} B^q_{e,w} & B^q_{e,v1}+A^q_e B^q_{e,v2} \end{bmatrix} \vec{w}_i, 
%   \\ \nonumber \vec{w}_k&\triangleq \begin{bmatrix} w_k^\top & v_k^\top\end{bmatrix}^\top. %\begin{bmatrix} w_i\\ v_i \end{bmatrix},
%  \end{align} (c.f., \cite[Appendix C]{yong2018simultaneous}).
The result can be obtained by applying Proposition \ref{prop:residecomposition},  \eqref{eq:resid_ideal} and the closed-form output signal:
  
  \begin{scriptsize}\vspace{-0.3cm}
  \begin{align}
  \nonumber y_k&=\begin{bmatrix} \begin{bmatrix} (CA^k)^{\top} \\ (CA^{k-1})^{\top} \\ \vdots \\ C^{\top} \\ I \end{bmatrix}^{\top} & \begin{bmatrix} H^{\top} \\ (CG)^{\top} \\ (CAG)^{\top} \\ \vdots \\ (CA^{k-1}G)^{\top} \end{bmatrix}^{\top} 
  & \begin{bmatrix} D^{\top} \\ (CB)^{\top} \\ (CAB)^{\top} \\ \vdots \\ (CA^{k-1}B)^{\top} \end{bmatrix}^{\top}  \end{bmatrix} \begin{bmatrix} t_k \\[-0.2cm] \\ d^{q*}_{0:k} \\[-0.2cm] \\ u^{q*}_{0:k} \end{bmatrix},%\begin{bmatrix}
  \end{align}
  \end{scriptsize}\vspace{-0.2cm}
  
\noindent  which can be derived by using \eqref{eq:sys_desc} and simple induction. %This, completes the proof.
 \end{proof}
 
\vspace{-0.35cm} \begin{proof}[Proof of Theorem \ref{thm:strong_mode_detect}]
To show that \eqref{item:second} is sufficient for asymptotic mode detectability, consider Lemma \ref{lem:mode_disjointness} with $\delta^{q,tri}_{r,k}$ as the upper bound. It suffices to show $\exists K \in \mathbb{N}$, such that \eqref{eq:disjointness} holds for $k \geq K, \forall q \neq q' \in \mathbb{Q}.$ 
 Notice that by Definition \ref{defn:computedresidual}, $C^q_2 \hat{x}^{\star,q}_{k|k}=C^q_2 x_k +T^q_2 v_k - r^{q|*}_k$.
% \begin{align}
 % \nonumber  r^{q|*}_k &= C^q_2 \tilde{x}^{\star,q}_{k|k}+v^q_{2,k}= C^q_2 x_k -C^q_2 \hat{x}^{\star,q}_{k|k}+T^q_2 v_k
  % \\ \nonumber &\implies C^q_2 \hat{x}^{\star,q}_{k|k}=C^q_2 x_k +T^q_2 v_k - r^{q|*}_k.
 % \end{align}
   Plugging this into \eqref{eq:disjointness}, we need to show $\exists K \in \mathbb{N}$ such that: 

\begin{small}\vspace{-0.4cm}
\begin{align} 
 &\| W^{q,q'} s^{q,q'}_k \|_2 >\delta^{q,tri}_{r,k}+\delta^{q',tri}_{r,k}+R^{q,q'}_z, \forall k \geq K, \label{eq:sufficient}% q \neq q' \in Q, %x_k \in X,
%\\ \nonumber &\| (C^q_2 - C^{q'}_2) x_{k}+(T^q_2-T^{q'}_2)v_k-(r^q_k-r^{q'}_k)+ D^{q}_{2} u^{q}_k-D^{q'}_{2} u^{q'}_k\|_2 
\\ \nonumber &s^{q,q'}_k \triangleq \begin{bmatrix} x^{\top}_k & v^{\top}_k & r^{q|* \top}_k & r^{q'|* \top}_k & u^{q \top}_k & u^{q' \top}_k \end{bmatrix}^{\top},\forall q \neq q' \in \mathbb{Q}.   
\end{align} 
\end{small}
A sufficient condition to satisfy \eqref{eq:sufficient} is that $\exists K \in \mathbb{N}$ such that $\forall k \geq K$, \eqref{eq:sufficient} holds for all $s^{q,q'}_k$. Equivalently, it suffices %that:
 \begin{align}
 \nonumber &\min \limits_{ x_k,v_k,r^q_k,r^{q'}_k } \| W^{q,q'} s^{q,q'}_k \|_2>\delta^{q,tri}_{r,k}+\delta^{q',tri}_{r,k}+R^{q,q'}_z
 \\ \nonumber & s.t. \ \|x_k\|_2 \leq R_x, \|v_k\|_2 \leq \eta_v, \|r^{q|*}_k\|_2 \leq \delta^{q,tri}_{r,k}, 
 \\ \nonumber &\ \ \ \ \ \|r^{q'|*}_k\|_2 \leq \delta^{q',tri}_{r,k}, \ \forall k \geq K, \forall q \neq q' \in \mathbb{Q}.
 \end{align}
 By expanding the constraint set, it is sufficient to require that $\exists K \in \mathbb{N}$ such that:
  \begin{align}
 \nonumber &\min \limits_{s^{q,q'}_k} \| W^{q,q'} s^{q,q'}_k \|_2>\delta^{q,tri}_{r,k}+\delta^{q',tri}_{r,k}+R^{q,q'}_z %\ s.t. 
 \\ \nonumber s.t. \ &\| s^{q,q'}_k\|^2_2 \leq R^2_x \hspace{-0.1cm}+\hspace{-0.1cm} \eta^2_v \hspace{-0.1cm}+\hspace{-0.1cm} (\delta^{q,tri}_{r,k})^2 \hspace{-0.1cm} +\hspace{-0.1cm} (\delta^{q',tri}_{r,k})^2+(u^q_k)^2+(u^{q'}_k)^2 
 \\ \nonumber &\forall k \geq K,\forall q \neq q' \in \mathbb{Q}.
 \end{align}
 Now, by \emph{matrix lower bound} theorem\moh{\cite{grcar2010matrix}} and similar argument as in the proof of Lemma \ref{lem:delta_inf_diverge}, it is sufficient to be satisfied that $ \exists K \in \mathbb{N} \ s.t. \ \forall k \geq K, \forall q \neq q' \in \mathbb{Q} : $
 
\vspace{-0.3cm}
 \begin{footnotesize}
 \begin{align} \label{eq:timed_sufficient_compatibility} 
 \sigma^2_{min} (W^{q,q'}) \hspace{-0.1cm}>\hspace{-0.1cm} \frac{ (\delta^{q,tri}_{r,k}+ \delta^{q',tri}_{r,k}+R^{q,q'}_z)^2}{R^2_x\hspace{-0.1cm}+\hspace{-0.1cm}\eta^2_v\hspace{-0.1cm}+\hspace{-0.1cm}( \delta^{q,tri}_{r,k})^2\hspace{-0.1cm}+\hspace{-0.1cm}( \delta^{q',tri}_{r,k})^2\hspace{-0.1cm}+\hspace{-0.1cm}(u^q_k)^2\hspace{-0.1cm}+\hspace{-0.1cm}(u^{q'}_k)^2}.
 \end{align}
 \end{footnotesize}
 
\vspace{-0.1cm}  \eqref{eq:timed_sufficient_compatibility} provides us a \emph{time-dependent} sufficient condition for mode detectability. In order to find a \emph{time-independent} sufficient condition, notice that $ \frac{ (\overline{\delta}^{q,tri}_{r,k}+ \overline{\delta}^{q',tri}_{r,k}+R^{q,q'}_z)^2}{R^2_x+\eta^2_v}$ is an upper bound for the right hand side of \eqref{eq:timed_sufficient_compatibility}, since the latter's denominator is smaller than the former's and the numerator of the latter is an upper bound signal for the former's  
  %consider that 
  by \moh{triangle and sub-multiplicative inequalities}. So a sufficient condition for \eqref{eq:timed_sufficient_compatibility} is $ \exists K \in \mathbb{N} \ s.t. \ \forall k \geq K, \forall q \neq q' \in \mathbb{Q} : $
    %\begin{small}
 \vspace{-0.1cm} \begin{align} \label{eq:timed_sufficient_compatibility_2}
 \sigma^2_{min} (W^{q,q'}) > \frac{ (\overline{\delta}^{q,tri}_{r,k}+ \overline{\delta}^{q',tri}_{r,k}+R^{q,q'}_z)^2}{R^2_x+\eta^2_v}.
 \end{align}
Then, for the above to hold, it suffices that 
 $$\sigma^2_{min} (W^{q,q'}) > \lim_{k\to \infty} \frac{ (\overline{\delta}^{q,tri}_{r,k}+ \overline{\delta}^{q',tri}_{r,k}+R^{q,q'}_z)^2}{R^2_x+\eta^2_v},$$
 which is equivalent to \eqref{item:second} by \eqref{eq:r_tri_converge}.
As for the sufficiency of \eqref{item:first}, notice that by Theorems \ref{thm:online_mode_elimination} and \ref{thm:resid_comp_up_bound}, Lemma \ref{lem:resdef} and Definition \ref{defn:strong_mode_detecatble}, for mode detectability, it suffices that for any specific mode $q$, the true mode $q^*$ and large enough $k$,

\vspace{-0.4cm}
\begin{small}
\begin{align}
%\vspace{-2.5cm}
\nonumber \|r^q_k\|_2=\|\begin{bmatrix} \mathbb{T}^{q,q^*}_k & \mathbb{B}^{q,q^*}_k & \mathbb{D}^{q,q^*}_k  \end{bmatrix} \begin{bmatrix} t^{\top}_k & u^{q^*\top}_{0:k} & d^{q*\top}_{0:k} \end{bmatrix}^{\top}\|_2>\delta^{q,tri}_{r,k},
 \end{align}
 \end{small}  
\vspace{-0.4cm} 

\noindent\moh{with $t_k$ given in \eqref{eq:inf_norm}.} Since $q^*$ is unknown, a sufficient condition \yong{to satisfy} %\moh{to imply} 
the above equality is $ \forall q' \neq q \in Q :$
 \begin{align}
\nonumber \|r^q_k\|_2=\|\begin{bmatrix} \mathbb{T}^{q,q'}_k & \mathbb{B}^{q,q'}_k & \mathbb{D}^{q,q'}_k  \end{bmatrix} \begin{bmatrix} t^{\top}_k & u^{q'\top}_{0:k} & d^{q*\top}_{0:k} \end{bmatrix}^{\top}\|_2>\delta^{q,tri}_{r,k}.
 \end{align}
So it suffices that $ \forall q' \neq q \in Q, \exists \overline{d} \in \mathbb{R}$, such that: 
  \begin{align}
 \nonumber &\min \limits_{t'_k} \|\begin{bmatrix} \mathbb{T}^{q,q'}_k & \mathbb{B}^{q,q'}_k & \mathbb{D}^{q,q'}_k  \end{bmatrix} \moh{t'_k}\|_2>\delta^{q,tri}_{r,k}
 \\[-0.2cm] \nonumber &s.t. \ t'_k=\begin{bmatrix} t^{\top}_k & u^{q'\top}_{0:k} & d^{q*\top}_{0:k} \end{bmatrix}^{\top},\|d^{q*}_{0:k}\|_2 \geq \overline{d},
 %t_k= \begin{bmatrix} \tilde{x}^{\top}_{0|0} & w^{\top}_0 & \dots & w^{\top}_{k-1} & v^{\top}_0 & \dots & v^{\top}_k \end{bmatrix},
 \\[-0.05cm] \nonumber &\ \ \ \ \ t_k= \begin{bmatrix} \tilde{x}^{\top}_{0|0} & w^{\top}_0 & \dots & w^{\top}_{k-1} & v^{\top}_0 & \dots & v^{\top}_k \end{bmatrix},
 \\ \nonumber  &\ \ \ \ \  \| \tilde{x}_{0|0} \|_{\infty} \leq \delta^x_0, \ \|w_i\|_{\infty} \leq \eta_w, \ \| v_j \|_{\infty} \leq \eta_v,
 \\ \nonumber &\ \ \ \ \ \forall i \in \{ 0,...,k-1 \}, \ \forall j \in \{ 0,...,k \}\yong{.} %, %\ and%\mathbb{A}^q_k \ is \ defined \ in \ Lemma \ref{lem:resdef}.  
  \end{align}
 % \vspace{-0.5cm}
 Again by matrix lower bound theorem, a sufficient condition  for the above inequality to hold is that $\exists \overline{d} \in \mathbb{R}$, such that: \vspace{-0.15cm}
  \begin{align} \label{eq:suff}
 % \vspace{-1.5cm}
   &\min \limits_{t_k,d_{0:k}}\|t'_k\|_2>\frac{{\delta}^{q,tri}_{r,k}}{\sigma_{min}\begin{bmatrix}  \mathbb{T}^{q,q'}_k & \mathbb{B}^{q,q'}_k & \mathbb{D}^{q,q'}_k  
 \end{bmatrix}}  
 \\[-0.2cm] \nonumber &s.t. \ t'_k=\begin{bmatrix} t^{\top}_k & u^{q'\top}_{0:k} & d^{q*\top}_{0:k} \end{bmatrix}^{\top},\|d^{q*}_{0:k}\|_2 \geq \overline{d},
 %t_k= \begin{bmatrix} \tilde{x}^{\top}_{0|0} & w^{\top}_0 & \dots & w^{\top}_{k-1} & v^{\top}_0 & \dots & v^{\top}_k \end{bmatrix},
 \\[-0.05cm] \nonumber &\ \ \ \ \ t_k= \begin{bmatrix} \tilde{x}^{\top}_{0|0} & w^{\top}_0 & \dots & w^{\top}_{k-1} & v^{\top}_0 & \dots & v^{\top}_k \end{bmatrix},
 \\ \nonumber  &\ \ \ \ \  \| \tilde{x}_{0|0} \|_{\infty} \leq \delta^x_0, \ \|w_i\|_{\infty} \leq \eta_w, \ \| v_j \|_{\infty} \leq \eta_v,
 \\ \nonumber &\ \ \ \ \ \forall i \in \{ 0,...,k-1 \}, \ \forall j \in \{ 0,...,k \}. %\ and%\mathbb{A}^q_k \ is \ defined \ in \ Lemma \ref{lem:resdef}
  \end{align}
 % \vspace{-0.5cm} 
  Finally, since ${\delta}^{q,tri}_{r,k} \leq \overline{\delta}^{q,tri}_{r,k}$ and
  \begin{footnotesize}
  \begin{align}
  \nonumber \|t'_k\|_2 =\| \begin{bmatrix} t^{\top}_k & u^{q'\top}_{0:k} & d^{q*\top}_{0:k} \end{bmatrix}\|_2 \geq \sqrt{0^2+0^2+\|d^{q*\top}_{0:k}\|^2_2}=\| d^{q*\top}_{0:k}\|_2,
  \end{align}
  \end{footnotesize}
  
  \vspace{-0.4cm}
\noindent   then a sufficient condition for \eqref{eq:suff} is that %$\exists \overline{d} \in \mathbb{R}$, such that: 
   \vspace{-0.15cm} \begin{align} \label{eq:suff2}
  \|d^{q*\top}_{0:k}\|_2 >\frac{\overline{\delta}^{q,tri}_{r,k}}{\sigma_{min}\moh{(}\begin{bmatrix}  \mathbb{T}^{q,q'}_k & \mathbb{B}^{q,q'}_k & \mathbb{D}^{q,q'}_k\end{bmatrix}\moh{)}}.%{\overline{\delta}^{q,tri}_{r,k}}\moh{/(}{\sigma_{min}\moh{(}\begin{bmatrix}  \mathbb{T}^{q,q'}_k & \mathbb{B}^{q,q'}_k & \mathbb{D}^{q,q'}_k\end{bmatrix}}\moh{))}.
  \end{align}
 
 \vspace{-0.1cm} 
Now suppose that $T^q_2 \neq T^{q'}_2$ (otherwise the matrix in the denominator of \eqref{eq:suff2} is zero and it never holds). Asymptotically speaking, the right hand side of \eqref{eq:suff2} converges to $\tilde{\delta} \triangleq \max \{0, (\overline{\delta}^{q,tri}_r/\overline{\sigma}^{q,q'}) \}$, since $\overline{\delta}^{q,tri}_{r,k}$ converges to $\overline{\delta}^{q,tri}_r$ and the least singular value in the denominator either diverges or converges to some steady value $\overline{\sigma}^{q,q'}$. So we set $\overline{d}$ equal to any real number strictly grater than $\tilde{\delta}$. By unlimited energy assumption for attack signal, after some large enough time step $K$, the monotone increasing function $\|d^{q*}_{0:k}\|_2$, exceeds $\overline{d}$ and so the system will be mode detectable. %In the second case, $\|d^{q*}_{0:k}\|_2$ never exceeds $\overline{d}$. In this case, the attack signal should converge to zero, since otherwise $\|d^{q*}_{0:k}\|_2$ diverges and can not be bounded above by $\overline{d}$. Hence in this case, there is \emph{no attack signal} asymptotically which is even stronger than asymptotic strong mode detectability. 
%This completes the proof of sufficiency of \ref{item:first}. Finally, sufficiency of \ref{item:third} directly follows from sufficiency of \ref{item:first} and \ref{item:second}.
\end{proof}

\end{document}